\documentclass[11pt]{article}
\usepackage[utf8]{inputenc}
\usepackage{color}
\usepackage[dvips,pdfstartview=FitH,pdfpagemode=None,colorlinks=true,citecolor=blue,linkcolor=blue]{hyperref}
\usepackage{amsmath, amsthm, amssymb, mathtools, authblk}
\usepackage{float}
%--------------
%% preamble.tex
%% this should be included with a command like
%% \input{preamble.tex}
%% \lecture{1}{September 4, 1996 }{Daniel A. Spielman}{name
%%  of poor scribe}

\newcommand{\lref}[2][]{\hyperref[#2]{#1~\ref*{#2}}}
\renewcommand{\eqref}[2][]{\hyperref[#2]{(\ref*{#2})}}
\newcommand{\bra}[1]{\{#1\}}
\setlength{\oddsidemargin}{.25in}
\setlength{\evensidemargin}{.25in}
\setlength{\textwidth}{6in}
\setlength{\topmargin}{-0.4in}
\setlength{\textheight}{8.5in}

\newcommand{\handout}[5]{
   \renewcommand{\thepage}{#1-\arabic{page}}
   \renewcommand{\thesection}{#1.\arabic{section}}
   \noindent
   \begin{center}
   \framebox{
      \vbox{
    \hbox to 5.78in { {\bf Communication Complexity}
     	 \hfill #2 }
       \vspace{4mm}
       \hbox to 5.78in { {\Large \hfill #5  \hfill} }
       \vspace{2mm}
       \hbox to 5.78in { {\it #3 \hfill #4} }
      }
   }
   \end{center}
   \vspace*{4mm}
}

\newtheorem{theorem}{Theorem}[section]
\newtheorem{corollary}[theorem]{Corollary}
\newtheorem{lemma}[theorem]{Lemma}

\newtheorem{definition}[theorem]{Definition}
\newtheorem{claim}[theorem]{Claim}

\newtheorem{remark}[theorem]{Remark}

\newcommand{\abs}[1]{\mathify{\left| #1 \right|}}
\DeclareMathOperator*{\E}{{\mathbb E}}

\newenvironment{proof-sketch}{\noindent{\bf Sketch of Proof}\hspace*{1em}}{\qed\bigskip}
\newenvironment{proof-idea}{\noindent{\bf Proof Idea}\hspace*{1em}}{\qed\bigskip}
\newenvironment{proof-of-lemma}[1]{\noindent{\bf Proof of Lemma #1}\hspace*{1em}}{\qed\bigskip}
\newenvironment{proof-attempt}{\noindent{\bf Proof Attempt}\hspace*{1em}}{\qed\bigskip}
%\newenvironment{proofof}[1]{\noindent{\bf Proof of #1:}\hspace*{1em}}{\qed\bigskip}

% \makeatletter
% \@addtoreset{figure}{section}
% \@addtoreset{table}{section}
% \@addtoreset{equation}{section}
% \makeatother

\newcommand{\AND}{\mathsf{AND}}

\newcommand{\wh}{\widehat}

\newcommand{\xor}{\mathsf{XOR}}
\newcommand{\XOR}{\mathsf{XOR}}
\newcommand{\EQ}{\mathsf{EQ}}
\newcommand{\OR}{\mathsf{OR}}

\newcommand{\THR}{\mathsf{THR}}
\newcommand{\ETHR}{\mathsf{ETHR}}

\newcommand{\MAJ}{\mathsf{MAJ}}

\newcommand{\UPP}{\mathsf{UPP}}

\newcommand{\OMB}{\mathsf{OMB}}
\newcommand{\OPT}{\mathsf{OPT}}
\newcommand{\sgn}{\mathrm{sgn}}

\newcommand\restr[2]{{% we make the whole thing an ordinary symbol
  \left.\kern-\nulldelimiterspace % automatically resize the bar with \right
  #1 % the function
  \vphantom{\big|} % pretend it's a little taller at normal size
  \right|_{#2} % this is the delimiter
  }}

\makeatletter
\def\fnum@figure{{\bf Figure \thefigure}}
\def\fnum@table{{\bf Table \thetable}}
\long\def\@mycaption#1[#2]#3{\addcontentsline{\csname
  ext@#1\endcsname}{#1}{\protect\numberline{\csname 
  the#1\endcsname}{\ignorespaces #2}}\par
  \begingroup
    \@parboxrestore
    \small
    \@makecaption{\csname fnum@#1\endcsname}{\ignorespaces #3}\par
  \endgroup}
\def\mycaption{\refstepcounter\@captype \@dblarg{\@mycaption\@captype}}
\makeatother

\newcommand{\mathify}[1]{\ifmmode{#1}\else\mbox{$#1$}\fi}
\newcommand{\bigO}O

% Coding theory addenda

\newcommand{\R}{{\mathbb R}}

\renewcommand{\epsilon}{\varepsilon}
\renewcommand{\phi}{\varphi}

%%% Course specific macros

%\newcommand{\Poly}{\mathcal{P}}

%\newcommand{\agr}{\mathrm{agr}}

%\newcommand{\gap}[2]{\text{gap}_{#1}-\text{#2}}

\bibliographystyle{plain}

\title{Weights at the Bottom Matter When the Top is Heavy}

\begin{document}
\author{Arkadev Chattopadhyay\thanks{Partially supported by a Ramanujan fellowship of the DST. arkadev.c@tifr.res.in}}
\author{Nikhil S.~Mande\thanks{Supported by a DAE fellowship. nikhil.mande@tifr.res.in}}
\affil{School of Technology and Computer Science, TIFR, Mumbai}
\date{}
\setcounter{Maxaffil}{0}
\renewcommand\Affilfont{\itshape\small}
\maketitle

\begin{abstract}
Proving super-polynomial lower bounds against depth-2 threshold circuits of the form $\THR \circ \THR$ is a well-known open problem that represents a frontier of our understanding in boolean circuit complexity. By contrast, exponential lower bounds on the size of $\THR \circ \MAJ$ circuits were shown by Razborov and Sherstov \cite{RS10} even for computing functions in depth-3 $\text{AC}^0$. Yet, no separation among the two depth-2 threshold circuit classes were known. In fact, it is not clear a priori that they ought to be different. In particular, Goldmann, H{\aa}stad and Razborov \cite{GHR92} showed that the class $\MAJ \circ \MAJ$ is identical to the class $\MAJ \circ \THR$.

In this work, we provide an exponential separation between $\THR \circ \MAJ$ and $\THR \circ \THR$. We achieve this by showing a function $f$ that is computed by linear size $\THR \circ \THR$ circuits and yet has exponentially large \emph{sign rank}. This, by a well-known result, implies that $f$ requires exponentially large $\THR \circ \MAJ$ circuits to be computed. Our result suggests that the sign rank method alone is unlikely to prove strong lower bounds against $\THR \circ \THR$ circuits.

The main technical ingredient of our work is to prove a strong sign rank lower bound for an $\XOR$ function. This requires novel use of approximation theoretic tools.
\end{abstract}

\newpage

\section{Introduction}
Understanding the computational power of constant-depth, unbounded fan-in threshold circuits is one of the most fundamental open problems in theoretical computer science. Despite several years of intensive research \cite{AM05,HMPST93,HG91,GHR92,Razborov92,Bruck90,KP97,KP98,Forster01,FKLMSS01,RS10,HP10,HP15,KW16,CSS16}, we still do not have strong lower bounds against depth-3 or depth-2 threshold circuits, depending on how we define threshold gates. The most natural definition of such a gate, denoted by  $\THR_{{\mathbf{w}}}$, is just a linear halfspace induced by the real weight vector $\mathbf{w}=  (w_0,w_1,\ldots,w_n) \in \mathbb{R}^{n+1}$. In other words, on an input $x \in \{-1,1\}^n$, 
\[
\THR_{\mathbf{w}}\big(x\big) = \sgn\left(w_0 + \sum_{i=1}^n w_ix_i\right).
\]
The class of all boolean functions that can be computed by circuits of depth $d$ and polynomial size, comprising such gates, is denoted by $LT_d$. The seminal work of Minsky and Papert \cite{MP69} showed that the simple function, Parity, is not in $LT_1$. While it is not hard to verify that Parity is in $LT_2$, an outstanding problem is to exhibit an explicit function that is not in $LT_2$. This problem is now a well-identified frontier for research in circuit complexity.

A natural question is how large the individual weights in the weight vector $\mathbf{w}$ need to be if we allow just integer weights. It was well-known \cite{Muroga71} that for every threshold gate with $n$ inputs, there exists a threshold representation for it that uses only integer weights of magnitude at most $2^{O(n\log n)}$. While proving a $2^{\Omega(n)}$ lower bound is not very difficult, a matching lower bound was shown only in the nineties by H{\aa}stad \cite{Hastad94}. Understanding the power of large weights vs.~small weights in the more general context of small-depth circuits has attracted attention by several works \cite{AM05,GHR92,SB91,HP10,HP15,Razborov92,HMPST93,Hof96,GK98}. More precisely, let $\widehat{LT}_d$ denote the class of boolean functions that can be computed by polynomial size and depth 
$d$ circuits comprising only of threshold gates each of whose integer weights are polynomially bounded in $n$, the number of input bits to the circuit. Interestingly, improving upon an earlier line of work \cite{CSV84,Pip87,SB91}, Goldmann, H{\aa}stad and Razborov \cite{GHR92} showed, among other things, that $LT_{d} \subseteq \widehat{LT}_{d+1}$. It also remains open to exhibit an explicit function that is not in $\widehat{LT}_3$. This is a very important frontier, as the work of Yao \cite{Yao90} and Beigel and Tarui \cite{BT94} show that the entire class $\textsf{ACC}$ is contained in the class of functions computable by quasi-polynomial size threshold circuits of small weight and depth three. By contrast, the relatively early work of Hajnal et al.~\cite{HMPST93} established the fact that the Inner-Product modulo 2 function (denoted by \textsf{IP}), that is easily seen to be in $\widehat{LT}_3$, is not in $\widehat{LT}_2$. Summarizing, we have $\widehat{LT}_2 \subseteq LT_2 \subseteq \widehat{LT}_3$. Where precisely between $\widehat{LT}_2$ and $\widehat{LT}_3$ do current techniques for lower bounds stop working?

In search of the answer to the above question, researchers have investigated the finer structure of depth-2 threshold circuits, and this has generated many new techniques that are interesting in their own right. Recall the Majority function, denoted by $\MAJ$, that outputs 1 precisely when the majority of its $n$ input bits are set to 1. It is simple to verify that $\widehat{LT}_2 = \MAJ \circ \MAJ$. Goldmann et al.~\cite{GHR92} proved two very interesting results. First, they showed that the class $\MAJ \circ \MAJ$ and $\MAJ \circ \THR$ are identical, i.e.~weights of the bottom gates do not matter if the top gate is allowed only polynomial weight. Second, they showed that $\MAJ \circ \MAJ$ is strictly contained in the class $\THR \circ \MAJ$, i.e.~the weight at the top does matter if the bottom weights are restricted to be polynomially bounded in the input length. This revealed the following structure: 
\[
\widehat{LT}_2 = \MAJ \circ \THR \subsetneq \THR \circ \MAJ \subseteq LT_2 \subseteq \widehat{LT}_3.
\]
This raised the following two questions: how powerful is the class $\THR \circ \MAJ$ and how does one prove lower bounds on the size of such circuits?

In a breakthrough work, Forster \cite{Forster01} showed that \textsf{IP} requires size $2^{\Omega(n)}$ to be computed by $\THR \circ \MAJ$ circuits. This yielded an exponential separation between $\THR \circ \MAJ$ and $\widehat{LT}_3$. This also meant that at least one of the two containments $\THR \circ \MAJ \subseteq LT_2$ and $LT_2 \subseteq \widehat{LT}_3$ is strict. While it is quite possible that both of them are strict, until now no progress on this question was made. In particular, Amano and Maruoka \cite{AM05} and Hansen and Podolskii \cite{HP10} state that separating $\THR \circ \MAJ$ from $\THR \circ \THR = LT_2$ would be an important step for shedding more light on the structure of depth-2 boolean circuits. However, as far as we know, there was no clear target function identified for the purpose of separating the two classes.

In this work, we exhibit such a function and prove that it achieves the desired separation. To state our result formally, consider the following function that is a simple adaptation of a well-known function called \textsf{ODD-MAX-BIT}, which we denote by $\text{OMB}^0_{\ell}$: it outputs $-1$ precisely if the rightmost bit that is set to $1$ occurs at an odd index. It is simple to observe that it is a linear threshold function: 
\[
\text{OMB}^0_{\ell}\big(x\big) = -1 \iff \sum_{i=1}^{\ell} (-1)^{i+1}2^i \left(1 + x_i\right) \geq 0.5
\]
Let $f_m \circ g_n : \{-1, 1\}^{m \times n} \rightarrow \{-1, 1\}$ be the composed function on $mn$ input bits, where each of the $m$ input bits to the outer function $f$ is obtained by applying the inner function $g$ to a block of $n$ bits. Then, we show the following:

\begin{theorem}  \label{thm:omb-ckt-size}
Let $F_n$ be defined on $n=2\ell^{4/3}$ bits as $\OMB^0_{\ell} \circ \OR_{\ell^{1/3} -\log l} \circ \XOR_2$. Every $\THR \circ \MAJ$ circuit computing $F_n$ needs size $2^{\Omega\left(n^{1/4}\right)}$.
\end{theorem}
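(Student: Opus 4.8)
The plan is to route everything through the sign rank of the communication matrix $M_{F_n}$. Step one is the folklore reduction (implicit in \cite{Forster01}): because every bottom $\MAJ$ gate has polynomially bounded integer weights, a $\THR\circ\MAJ$ circuit of size $s$ computing $F_n$ yields a private-coin unbounded-error protocol of cost $O(\log s+\log n)$ for $F_n$ under any partition of the inputs. Writing the circuit as $\sgn(\sum_{j=1}^{s}v_j g_j)$ with $g_j(x)=\sgn(a_j(y)+b_j(z))$ and $|a_j(y)|,|b_j(z)|=\poly(n)$, Alice privately samples $j$ with probability $|v_j|/\sum_k|v_k|$, sends $j$ and the $O(\log n)$-bit integer $a_j(y)$, and Bob outputs $\sgn(v_j)\cdot\sgn(a_j(y)+b_j(z))$; the expected output is $(\sum_j v_j g_j(x))/\sum_k|v_k|$, whose sign is $F_n(x)$. (Small bottom weights are essential here: for $\THR$ bottom gates the message $a_j(y)$ could be exponentially long, which is exactly why this argument says nothing about $\THR\circ\THR$.) By the Paturi--Simon correspondence, $\text{sign-rank}(M_{F_n})\le\poly(s,n)$, so it suffices to prove $\text{sign-rank}(M_{F_n})=2^{\Omega(n^{1/4})}$. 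I would use the partition sending one input of each bottom $\XOR_2$ gate to Alice and the other to Bob, so that $M_{F_n}[a,b]=G(a\oplus b)$ is an $\XOR$-function matrix, with $G:=\OMB^0_\ell\circ\OR_{\ell^{1/3}-\log\ell}$ a function on $N=n/2$ bits, $\ell=\Theta(N^{3/4})$, and hence $\ell^{1/3}=\Theta(n^{1/4})$.

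Step two is to lower-bound the sign rank of this $\XOR$-function matrix. Forster's original spectral inequality cannot be used directly: the operator norm of $[G(a\oplus b)]_{a,b}$ equals $2^{N}\max_{S}|\widehat{G}(S)|$, and an odd-max-bit of ORs will have some large Fourier coefficient. So I would use the smoothed/dual form of Forster's method due to Razborov and Sherstov \cite{RS10}: it suffices to produce a probability distribution $\mu$ on $\{-1,1\}^N$ that is sufficiently spread out relative to uniform, together with a degree $d=\Omega(\ell^{1/3})$, such that no degree-$<d$ polynomial sign-represents $G$ consistently with $\mu$ --- equivalently, by LP duality, a smooth dual witness $\psi$ of degree $d$ that is orthogonal to all parities of degree $<d$ and correlated with $G$. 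Applying the generalized Forster bound to the $\mu$-reweighted pattern matrix of $G$ then gives $\text{sign-rank}(M_{F_n})=2^{\Omega(d)}=2^{\Omega(n^{1/4})}$; the trade-off among $d$, the correlation, and the smoothness parameter is what produces the exponent $n^{1/4}$.

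It remains to build $\psi$ for $G=\OMB^0_\ell\circ\OR_{\ell^{1/3}-\log\ell}$. The guiding intuition is that near its decision boundary $\OMB^0_\ell$ reads off the highest-index block containing a $1$ and tests the parity of that index, so in the regime where only the top few blocks matter, $G$ exposes an $\AND$-of-$\OR$ sub-structure of Minsky--Papert type \cite{MP69} whose sign-representation degree is $\Omega(\ell^{1/3})$. One would start from the known dual witnesses for a single $\OR_{\ell^{1/3}}$ and for this small gadget, and then combine them across the $\ell$ blocks, respecting the odd-max-bit structure --- the many blocks being what lets the construction reach degree $\Omega(\ell^{1/3})$ while keeping the final witness smooth on the full $N$-bit domain.

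The main obstacle, and the step I expect to be hardest, is carrying out this composition with essentially no loss. Since $\OMB^0_\ell$ is a halfspace whose integer weights grow like $2^{\Theta(\ell)}$, crude Fourier or spectral estimates on it are useless; instead one must argue combinatorially about which block is ``active'' on a given input and show that the per-block dual objects glue into a single $\psi$ that simultaneously preserves (i) exact orthogonality to all parities of degree $<d$, (ii) a $1/\poly$ (or constant) correlation with $G$, and (iii) the smoothness the generalized Forster bound demands. Making this robust composition lossless enough to retain $d=\Omega(\ell^{1/3})$, rather than some weaker $\ell^{o(1)}$ bound, is where the approximation-theoretic tools (dual polynomials for threshold degree together with their behaviour under block composition) must be pushed in a new direction. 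With $\psi$ in hand, substituting $n=2\ell^{4/3}$ back into the sign-rank bound and then into the reduction of the first paragraph completes the proof.
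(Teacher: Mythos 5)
Your first step (converting a size-$s$ $\THR\circ\MAJ$ circuit into an unbounded-error protocol, hence a $\poly(s,n)$ bound on the sign rank of $M_{F_n}$) matches the paper's use of the Forster et al.\ bound, and your choice of partition making $M_{F_n}[a,b]=G(a\oplus b)$ with $G=\OMB^0_\ell\circ\OR_{\ell^{1/3}-\log\ell}$ is also the paper's. The gap is in step two: for an $\XOR$-function matrix the spectral norm of the $\mu$-reweighted matrix is $2^N\max_{S\subseteq[N]}|\widehat{G\mu}(S)|$ with the maximum over \emph{all} $S$, so a dual witness that is orthogonal only to parities of degree $<d$ gives no control whatsoever on the spectral norm --- a single large high-degree Fourier coefficient of $G\mu$ destroys the generalized Forster bound. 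The pattern-matrix machinery you invoke is tailored to the Razborov--Sherstov selector gadget, where low-degree orthogonality does suffice; it does not transfer to the $\XOR_2$ gadget. This is precisely the point the paper flags as its main new technical challenge: its LP demands that $\mu f$ have small correlation with \emph{every} parity simultaneously with the smoothness constraint on the set $X$ where all ORs evaluate to $-1^\ell$.

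The second, independent gap is that the dual witness --- the entire approximation-theoretic content of the theorem --- is left unconstructed, and the route you sketch (threshold degree of an embedded Minsky--Papert $\AND\circ\OR$ near the decision boundary of $\OMB^0_\ell$) is not the mechanism that works here. The paper instead analyzes the dual LP directly in three moves: a Krause--Pudl\'ak averaging over the $\OR$ blocks converts the dual polynomial into a unit-weight $\OR$-polynomial over the $\ell$ block-outputs, incurring error $2\ell\cdot 2^{-m}$; a random restriction kills all $\OR$-monomials of degree $\ge d\approx\ell^{1/3}$ while leaving an $\OMB^0_{\ell/8}$ instance alive; and a Beigel-style ``doubling'' argument, driven by the Ehlich--Zeller/Rivlin--Cheney growth lemma applied block by block, shows that any degree-$<\sqrt{\ell/32}$ $\OR$-polynomial with the inherited margin properties must attain a value exceeding its weight --- a contradiction. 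The hardness being exploited is thus the weight--degree tradeoff of \textsf{ODD-MAX-BIT} (exponentially large values are forced on low-degree, low-weight representations), not a threshold-degree lower bound for $\AND\circ\OR$. Since you defer exactly this construction and would feed it into a spectral framework that is unsound for $\XOR$-function matrices, the proposal as written does not yield the theorem.
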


To show that the above suffices to provide us with the separation of threshold circuit classes, we first observe the following: for each $x \in \{-1,1\}^n$, let $\ETHR_{\mathbf{w}}(x) = -1 \iff w_0 + w_1x_1 +\cdots+ w_n x_n = 0$. Thus, $\ETHR$ gates are also called exact threshold gates. By first observing that every function computed by a circuit of the form $\THR \circ \OR$ can also be computed by a circuit of the form $\THR \circ \AND$ with a linear blow-up in size, it follows that $F_n$ can be computed by linear size circuits of the form $\THR \circ \AND \circ \XOR_2$. Observe that each $\AND \circ \XOR_2$ is computed by an $\ETHR$ gate. Hence, $F_n$ is in $\THR \circ \ETHR$, a class that Hansen and Podolskii \cite{HP10} showed is identical to the class $\THR \circ \THR$. Thus, Theorem~\ref{thm:omb-ckt-size} yields the following fact:

\begin{corollary}  \label{cor:main}
The function $F_n$ (exponentially) separates the class $\THR \circ \MAJ$ from $\THR \circ \THR$.
\end{corollary}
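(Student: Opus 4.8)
The plan is to establish the separation as the conjunction of two facts about the single function $F_n$: an \emph{upper bound} placing $F_n$ inside $\THR \circ \THR$ with polynomially many (indeed linearly many) gates, and a \emph{lower bound} showing that $F_n$ cannot be computed by sub-exponential size $\THR \circ \MAJ$ circuits. The second fact is exactly the content of Theorem~\ref{thm:omb-ckt-size}, which we may take as given, so the only genuinely new work is the upper bound together with the bookkeeping needed to combine the two.

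For the upper bound I would argue as follows. Recall $F_n = \OMB^0_\ell \circ \OR_k \circ \XOR_2$ with $k = \ell^{1/3} - \log\ell$, and that the excerpt already exhibits $\OMB^0_\ell$ as a single $\THR$ gate. First, every circuit of the form $\THR \circ \OR$ can be rewritten as one of the form $\THR \circ \AND$ with at most a linear blow-up in size: by De Morgan's law each bottom $\OR$ gate equals the negation of an $\AND$ applied to the complemented inputs, and since a $\THR$ gate is closed under negating any of its inputs, both the sign flip produced by that negation and the flips of the literals feeding it can be absorbed into the weights and threshold of the top $\THR$ gate (and complementing a $\XOR_2$ output is again a $\XOR_2$). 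Applying this to the middle layer of $F_n$ shows that $F_n$ is computed by a linear-size circuit of the form $\THR \circ \AND \circ \XOR_2$. Next, a single $\AND_k \circ \XOR_2$ sub-circuit on $2k$ bits is computed by one exact-threshold ($\ETHR$) gate: giving the $j$-th two-bit block weights equal to a sufficiently large power $M^{\,j}$ (e.g.\ $M=3$) turns the requirement ``every block is an $\XOR$'' into the single linear equation $\sum_{j} M^{\,j}(x_{2j-1}+x_{2j}) = \sum_{j} M^{\,j}$, whose $\{0,1\}$-solutions are precisely the inputs on which $\AND_k \circ \XOR_2$ outputs $1$; translating to $\{-1,1\}$ inputs is an affine change of variables and keeps a single $\ETHR$ gate. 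Hence $F_n \in \THR \circ \ETHR$ with $O(\ell)$ gates, and by the result of Hansen and Podolskii \cite{HP10} that $\THR \circ \ETHR = \THR \circ \THR$, we conclude that $F_n$ has linear-size $\THR \circ \THR$ circuits.

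Finally, Theorem~\ref{thm:omb-ckt-size} states that every $\THR \circ \MAJ$ circuit computing $F_n$ has size $2^{\Omega(n^{1/4})}$. Combining this with the upper bound, $F_n$ is a function computed by linear-size $\THR \circ \THR$ circuits but requiring $2^{\Omega(n^{1/4})}$-size $\THR \circ \MAJ$ circuits, which yields the claimed exponential separation $\THR \circ \MAJ \subsetneq \THR \circ \THR$. Within this derivation there is no real obstacle: the conversions above are elementary and the rest is citation. The entire difficulty of the paper is thus concentrated in Theorem~\ref{thm:omb-ckt-size}, i.e.\ in proving an exponential sign-rank lower bound for the $\XOR$ function $\OMB^0_\ell \circ \OR_k \circ \XOR_2$ (from which the $\THR \circ \MAJ$ lower bound follows by the standard sign-rank argument), and it is there that I would expect the approximation-theoretic machinery to be needed.
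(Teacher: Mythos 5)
Your proposal is correct and follows essentially the same route as the paper: Lemma~\ref{lem: throrand} is exactly your De Morgan conversion of $\THR \circ \OR$ to $\THR \circ \AND$, the observation that $\AND \circ \XOR_2$ is a single $\ETHR$ gate (which you spell out more explicitly than the paper does), Theorem~\ref{thm: HP} for $\THR \circ \ETHR \subseteq \THR \circ \THR$, and Theorem~\ref{thm:omb-ckt-size} for the lower bound. No gaps.
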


\subsection{Our Techniques and Related Work}  \label{sec:techniques}
The starting point for our lower bound is the same as for all known lower bounds (see, for example, \cite{Forster01, RS10, BT16}) on the size of $\THR \circ \MAJ$ circuits. We strive to prove a lower bound on a quantity called the \emph{sign rank} of our target function $f$. Given a partition of the input bits of $f$ into two parts $X,Y$, consider the real matrix $M_f$, given by $M_f[x,y] = f(x, y)$ for each $x \in \{-1,1\}^{|X|},y \in \{-1,1\}^{|Y|}$. Any real matrix sign represents $M_f$ if each if its entries agrees in sign with the corresponding entry of $M_f$. The sign rank of $M_f$ (also informally called sign rank of $f$, when the input partition is clear from the context) is the rank of a minimal rank matrix that sign represents it. It is not hard to see that the sign rank of a function $f$ computed by $\THR \circ \MAJ$ circuit of size $s$ is at most $O(n \cdot s)$. This sets a target of proving a strong lower bound on the sign rank of $f$ for showing that it is hard for $\THR \circ \MAJ$.

Sign rank has a matrix-rigidity flavor to it, and therefore is quite non-trivial to bound. Forster's \cite{Forster01} deep result (see Theorem~\ref{thm:Forster}) shows that the sign rank of a matrix can be lower bounded by appropriately upper-bounding its spectral norm. This is enough to lower bound the sign rank of functions like \textsf{IP} as the corresponding matrices are orthogonal and therefore have relatively small spectral norm. However for other functions $f$, the spectral norm of the sign matrix $M_f$ can be large. This is true, for example, for many functions in $\text{AC}^0$. In a beautiful work, Razborov and Sherstov \cite{RS10} showed that Forster's basic method can be adapted to prove exponentially strong lower bounds on the sign rank of such a function $f$. However, our first problem is on devising an $f$ that is in $\THR \circ \THR$ that plausibly has high sign rank. On this, we were guided by another interpretation of sign rank, due to Paturi and Simon \cite{PS86}. Paturi and Simon introduced a model of 2-party randomized communication, called the unbounded-error model. In this model, Alice and Bob have to give the right answer with probability just greater than 1/2 on every input. This is, by far, the strongest 2-party known model against which we know how to prove lower bounds. \cite{PS86} showed that the sign rank of the communication matrix of $f$ essentially characterizes its unbounded error complexity. 

Why should some function $f \in \THR \circ \THR$ have large unbounded-error complexity? The natural protocol one is tempted to use is the following: assume that the sum of the magnitude of the weights of the top $\THR$ gate is 1. Sample a sub-circuit of the top gate with a probability proportional to its weight. Then, use the best protocol for the sampled bottom $\THR$ gate. Note that for any given input $x$, with probability $1/2+\epsilon$, one samples a bottom gate that agrees with the value of $f(x)$. Here, $\epsilon$ can be as small as the smallest weight of the top gate. Thus, if we had a small cost randomized protocol for the bottom $\THR$ gate that errs with probability significantly less than $\epsilon$ we would have a small cost unbounded-error protocol for our function $f$. Fortunately for us (the lower bound prover), there does not seem to exist any such efficient randomized protocol for $\THR$, when $\epsilon = 1/2^{n^{\Omega(1)}}$. 

Taking this a step further, one could hope that the bottom gates could be any function that is hard to compute with such tiny error $\epsilon$. The simplest such canonical function is Equality (denoted by $\EQ$). Therefore, a plausible target is $\THR \circ \EQ$. This still turns out to be in $\THR \circ \THR$ as $\EQ \in \ETHR$. Moreover, $\EQ$ has a nice composed structure. It is just $\AND \circ \XOR$, which lets us re-express our target as $f = \THR \circ \AND \circ \XOR$, for some top $\THR$ that is `suitably' hard; hard so that the sign rank of $f$ becomes large! At this point, we view $f$ as an $\XOR$ function whose outer function, $g$, needs to have sufficiently good analytic properties for us to prove that $g \circ \XOR$ has high sign rank.

We are naturally drawn to the work of Razborov and Sherstov \cite{RS10} for inspiration as they bound the sign rank of a three-level composed function as well. They showed that $\AND \circ \OR \circ \AND_2$, an \emph{$\AND$ function}, has high sign rank. They exploited the fact that $\AND$ functions embed inside them \emph{pattern matrices}, which have nice convenient spectral properties as observed in \cite{She11}. These spectral properties dictate them to look for an \emph{approximately smooth orthogonalizing} distribution w.r.t which the outer function $f=\AND \circ \OR$ has zero correlation with small degree parities. This gives rise naturally to an LP that seeks to maximize the smoothness of the distribution under the constraints of low-degree orthogonality. The main technical challenge that the Razborov-Sherstov work overcomes is the analysis of the dual of this LP using and building appropriate tools of approximation theory. We take cue from this work and follow its general framework of analysing the dual of a suitable LP. However, as we are forced to work with an $\XOR$ function, there are new challenges that crop up. This is expected for if we take the same outer function of $\AND \circ \OR$, then the resulting $\XOR$ function has small sign rank. Indeed, this remains true even if one were to harden the outer function to $\MAJ \circ \OR$. This is simply because a simple efficient $\UPP$ protocol for $\MAJ \circ \EQ$ exists: pick a random $\EQ$ and then execute a protocol of cost $O(\log n)$ that solves this $\EQ$ with error less than $1/n^2$. 

The specific new technical challenge that one faces is the following: instead of low degree orthogonality, one now needs a distribution $\mu$ w.r.t which the outer function has low correlation with \emph{all parities} (see \hyperlink{LP1}{LP1}).  Just dealing with high degree parity constraints, though non-trivial, was done in the recent work of the authors \cite{CM17}. However, unlike there, here one needs the additional constraint of the distribution being (approximately) smooth enough. Analysing this combination of high degree parity constraints and the smoothness constraints, is the main new technical challenge that our work addresses. We do this by a novel combination of ideas that differs entirely from the Razborov-Sherstov analysis.

Analyzing the dual of our LP (\hyperlink{LP2}{LP2}) involves arguing against the existence of a certain kind of (possibly high degree) polynomial representation. We require several ideas to deal with it. First, the dual polynomial $P$ has unit weight. While it does not necessarily sign represent $f = \THR_{\ell} \circ \OR_m$ , it is constrained to not stray too far away from zero on the wrong side on each point of its domain $\{-1,1\}^n$.  Moreover, over a set $X$, where we want the distribution to be smooth in in the primal LP, roughly speaking, $P$'s margin in representing $f$ on average has to be good. Since the set $X$ has to be large (to get good approximate smoothness), we are essentially forced to include in $X$ all inputs that are mapped to $1^l$ by the bottom $\OR$s of $f$. In particular, we set $X$ to be precisely the set of such points. With this setting, our bound on the sign rank becomes roughly $\delta/\OPT$, where $\OPT$ is the optimal value of the LP. 

The first idea we use is an averaging argument that appeared in the work of Krause and Pudl{\'a}k \cite{KP98}. What this does is that for each possible input $y \in \{-1,1\}^{\ell}$ to $\THR_{\ell}$, it takes the average of all values of $P$ under the uniform distribution over all points $x$ such that $\OR_m\big(x\big) = y$. This achieves the following as described in Lemma~\ref{lem: thror}: the polynomial $P$ over $x$ transforms to \emph{an $\OR$ polynomial} $Q$, over $y$'s, of the same weight as $P$, plus an error term whose magnitude is exponentially small in the fan-in of the bottom $\OR$ gates of $f$. Here, an $\OR$ polynomial is a linear combination of $\OR$s of subsets of variables. Assuming, for the sake of contradiction, $\OPT$ to be large enough, we can safely ignore the error term. This gives us a passage to an $\OR$ polynomial of unit weight representing our top $\THR$ function $g$, with the same worst-case guarantee that held for $P$. Additionally, we get the guarantee that at $y=-1^l$, $Q$'s margin is better by the average margin of $P$ on the set $X$. The intuition is that when $OPT$ is large, this average margin is also large compared to $\Delta$, the worst case margin.

Now we want to argue that such a $Q$ does not exist if we select our top threshold $g$ judiciously. We select the ODD-MAX-BIT function, denoted by $\OMB$, for this purpose. We then observe that if we randomly restrict each variable to $-1$, then the expected weight of $\OR$ monomials of degree at least $d$ that do not get fixed is as small as $1/2^d$. Ignoring this high degree monomials, therefore does not decrease our margin by too much. Further, with high probability, the restriction induces an $\OMB$ of sufficiently large number of free variables. This now gives us a polynomial of $Q'$ of degree less than $d$ that has worst case margin not too small, but does somewhat better on $-1^l$.  While margin bounds against sign representing polynomials of sufficiently small degree have been obtained several times before, our setting is different. $Q'$ is not sign-representing $\OMB$. It is here that our choice of the ODD-MAX-BIT function comes in very handy. We show that a standard approximation theoretic lemma of Ehlich and Zeller \cite{EZ64}, Rivlin and Cheney \cite{RC66} can be used to argue against the existence of such a $Q'$ for $\OMB$.

\section{Preliminaries}

In this section, we provide the necessary preliminaries.
\begin{definition}[Threshold functions]
A function $f : \bra{-1, 1}^n \rightarrow \bra{-1, 1}$ is called a \emph{linear threshold function} if there exist integer weights $a_0, a_1, \dots, a_n$ such that for all inputs $x \in \bra{-1, 1}^n$, $f(x) = \sgn(a_0 + \sum_{i = 1}^na_ix_i)$.
Let $\THR$ denote the class of all such functions.
\end{definition}

\begin{definition}[Exact threshold functions]
A function $f : \bra{-1, 1}^n \rightarrow \bra{-1, 1}$ is called an \emph{exact threshold function} if there exist reals $w_1, \dots, w_n, t$ such that 
\[
f(x) = -1 \iff \sum_{i = 1}^n w_ix_i = t
\]
Let $\ETHR$ denote the class of exact threshold functions.
\end{definition}

Hansen and Podolskii \cite{HP10} showed the following.

\begin{theorem}[Hansen and Podolskii \cite{HP10}]\label{thm: HP}
If a function $f : \bra{-1, 1}^n \rightarrow \bra{-1, 1}$ can be represented by a $\THR \circ \ETHR$ circuit of size $s$, then it can be represented by a $\THR \circ \THR$ circuit of size $2s$.
\end{theorem}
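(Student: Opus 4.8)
The plan is to simulate each bottom $\ETHR$ gate by an \emph{affine} combination of the outputs of two ordinary $\THR$ gates, and then to absorb that affine combination into the top threshold gate. Fix one bottom gate $g = \ETHR_{\mathbf w, t}$, which outputs $-1$ exactly when $\sum_i w_i x_i = t$. Introduce the two halfspaces $u(x)$ = indicator of $\sum_i w_i x_i \ge t$ and $v(x)$ = indicator of $\sum_i w_i x_i \le t$; by the classical fact (cf.\ \cite{Muroga71}) that every halfspace over $\bra{-1,1}^n$ has an integer-weight representation, we have $u,v \in \THR$. The key observation is that for \emph{every} $x$ at least one of the two inequalities holds, i.e.\ $u(x)=1$ or $v(x)=1$; hence, viewing $u,v$ as $\bra{0,1}$-valued, the event ``equality holds'' is $u(x)v(x) = u(x)+v(x)-1$, which is \emph{linear} in $u(x),v(x)$. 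So the $\pm1$ output of $g$ is an integer affine function of $u(x)$ and $v(x)$.

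Carrying this out for all $s$ bottom gates produces $2s$ bottom threshold gates $U_1,V_1,\dots,U_s,V_s$ (now normalized to output in $\bra{-1,1}$), and a short computation converting the $\bra{0,1}$ indicator back to the $\pm1$ convention gives, for the $j$-th $\ETHR$ output, $\tilde z_j = 1 - U_j - V_j$. Substituting into the top gate $\sgn\big(a_0 + \sum_{j=1}^s a_j \tilde z_j\big)$ and collecting terms yields
\[
f(x)\;=\;\sgn\!\Big(a_0 + \sum_{j=1}^s a_j \tilde z_j\Big)\;=\;\sgn\!\Big(\big(a_0 + \sum_{j=1}^s a_j\big) \;-\; \sum_{j=1}^s a_j U_j \;-\; \sum_{j=1}^s a_j V_j\Big),
\]
which is a single $\THR$ gate (integer constant and integer weights $-a_j$ on each of $U_j,V_j$) reading the $2s$ bottom $\THR$ gates. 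This is the desired $\THR \circ \THR$ circuit with $2s$ bottom gates.

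A few routine technicalities need handling. The $\ETHR$ gate is defined with real weights, but $u$ and $v$ are genuine Boolean halfspaces, so their integer-weight realizations exist and the resulting bottom gates are legitimate $\THR$ gates; moreover the implication ``$u(x)=0 \Rightarrow v(x)=1$'' is a statement about the Boolean functions (not the weights), so it is not affected by the re-encoding. One should also ensure no weighted sum ever evaluates to $0$ so that $\sgn$ is unambiguous; this is standard (e.g.\ double all integer weights and add $1$ to the constant, forcing the argument to be odd), and I would dispose of it first. The actual content is just the single observation that the two inequalities defining an exact threshold cannot both fail, so the $\AND$ of the two halfspaces equals their \emph{sum} minus a constant --- a linear expression that folds into the top gate at the cost of only doubling the bottom fan-in --- and I do not anticipate a genuine obstacle beyond the $\bra{0,1}$ versus $\bra{-1,1}$ bookkeeping above.
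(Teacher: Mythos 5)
Your proposal is correct and is essentially the paper's own argument: both express each exact threshold gate as an affine combination of two ordinary halfspaces (your $v$, ``$\sum_i w_ix_i \le t$'', is just the negation of the paper's second gate ``$\sum_i w_ix_i \ge t + \epsilon$'', so your identity $\tilde z_j = 1 - U_j - V_j$ coincides with the paper's $f_i = g_{i,2} - g_{i,1} + 1$) and then fold that affine combination into the top threshold gate. No substantive difference.
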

For the sake of completeness and clarity, we provide the proof below.
\begin{proof}
Let $f$ be an exact threshold function with the representation $\sum_{i = 1}^n w_ix_i = t$.  There exists an $\epsilon_f > 0$ such that $\sum_{i = 1}^n w_ix_i > t \implies \sum_{i = 1}^n w_ix_i > t + \epsilon_f$.
Consider a $\THR \circ \ETHR$ circuit of size $s$, say it computes $\sgn(c_0 + \sum_{i = 1}^s f_i)$, where $f_i$s have exact threshold representations $\sum_{j = 1}^n w_{i, j}x_j = t_i$, respectively.
Consider the $\THR \circ \THR$ circuit of size $2s$, given by $\sgn\left(\sum_{i = 1}^s c_i (g_{i, 2} - g_{i, 1} + 1)\right)$, where $g_i$s are threshold functions with representations as follows.
\begin{align*}
g_{i, 1} & = 1 \iff \sum_{j = 1}^n w_{i, j}x_j \geq t_i,\\
g_{i, 2} & = 1 \iff \sum_{j = 1}^n w_{i, j}x_j \geq t_i + \epsilon_{f_i}.
\end{align*}
It is easy to verify that this circuit computes $f$.
\end{proof}

\begin{remark}
In fact, Hansen and Podolskii \cite{HP10} showed that the circuit class $\THR \circ \THR$ is identical to the circuit class $\THR \circ \ETHR$.  However, we do not require the full generality of their result.
\end{remark}

We now note that any function computable by a $\THR \circ \OR$ circuit can be computed by a $\THR \circ \AND$ circuit without a blowup in the size.

\begin{lemma}\label{lem: throrand}
Suppose $f : \bra{-1, 1}^n \rightarrow \bra{-1, 1}$ can be computed by a $\THR \circ \OR$ circuit of size $s$.  Then, $f$ can be computed by a $\THR \circ \AND$ circuit of size $s$.
\end{lemma}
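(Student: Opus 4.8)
The plan is to prove Lemma~\ref{lem: throrand} by a syntactic transformation of the circuit that replaces each bottom $\OR$ gate by an $\AND$ gate while compensating for the negation at the top threshold gate. The key observation is the De Morgan-style identity: over $\{-1,1\}$-valued inputs with the convention that $1$ is ``false'' and $-1$ is ``true'' (or whichever convention the paper fixes), an $\OR$ of a set of literals equals, up to an affine transformation, the negation of an $\AND$ of the negated literals. Since the top gate is a single threshold gate $\sgn(c_0 + \sum_{i=1}^s c_i z_i)$ applied to the outputs $z_i$ of the bottom gates, and since threshold gates are closed under affine reparametrization of each input variable, flipping each $\OR$ to an $\AND$ can be absorbed entirely into new top-level weights and a new threshold.

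Concretely, I would first write $\OR$ of variables $x_{j}$, $j \in S$ (possibly negated) in the form $\OR_S(x) = \sgn\bigl(-1 + \sum_{j \in S}(1 + \epsilon_j x_j)\bigr)$ or some such explicit threshold expression; similarly write the corresponding $\AND$ as an explicit threshold expression. Then note that $\OR_S(x) = -\AND_{S}(\bar x)$ where $\bar x$ denotes negating the relevant coordinates, i.e.\ the value of an $\OR$ gate differs from that of a suitable $\AND$ gate on negated inputs only by a global sign. Substituting $z_i = \OR_{S_i}(x) = -\AND_{S_i}(\bar x) + 0$ — more precisely, using $z_i = a_i w_i(x) + b_i$ where $w_i$ is the $\AND$ gate output and $a_i, b_i$ are fixed constants ($a_i = -1$, $b_i = 0$ in the cleanest normalization) — into $\sgn(c_0 + \sum_i c_i z_i)$ yields $\sgn\bigl((c_0 + \sum_i c_i b_i) + \sum_i (c_i a_i) w_i(x)\bigr)$, which is a $\THR \circ \AND$ circuit with new integer weights $c_i' = c_i a_i$ and new threshold $c_0' = c_0 + \sum_i c_i b_i$, and with exactly $s$ bottom gates. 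One must check that the $\AND$ gates are fed the appropriate (negated) input literals, which is fine since $\AND$ gates, like $\OR$ gates, are allowed arbitrary literals.

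The only genuine subtlety — and the step I expect to need the most care — is making sure the sign conventions line up so that the affine relation $\OR_S(x) = a\cdot \AND_{S'}(\ell(x)) + b$ holds with $a,b$ truly constant and $\ell$ a fixed literal map, and in particular that no case distinction on the input is needed. This is where the two possible conventions ($1 \mapsto$ true vs.\ $-1 \mapsto$ true) matter: with the convention fixed in the paper's definition of $\OMB^0_\ell$ (where $-1$ is the ``active'' output), the identity $\OR = -\AND\circ(\text{negation})$ should hold verbatim, but I would double-check the boundary case (all inputs making the $\OR$ false) to confirm the global sign is consistent there too. Once that identity is nailed down, the rest is the one-line substitution above, and the size bound $s \to s$ is immediate since we neither add nor remove bottom gates. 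I would close by remarking that this is exactly what lets us pass from the $\THR \circ \OR \circ \XOR_2$ form of $F_n$ to the $\THR \circ \AND \circ \XOR_2$ form, hence to $\THR \circ \ETHR$ via $\AND \circ \XOR_2 \in \ETHR$, as used in the derivation of Corollary~\ref{cor:main}.
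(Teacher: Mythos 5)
Your proposal is correct and matches the paper's proof: both replace each bottom $\bigvee_{j \in S_i} x_j$ by $-\bigwedge_{j \in S_i} x_j^c$ via De Morgan in the $\{-1,1\}$ convention and absorb the resulting sign into the top-level weights, keeping the gate count at $s$. The sign-convention check you flag is exactly the identity the paper uses, so no further care is needed.
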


\begin{proof}
Consider a $\THR \circ \OR$ circuit of size $s$, computing $f$, say
\[
f(x) = \sgn\left(\sum_{i = 1}^s w_i \bigvee_{j \in S_i}x_j\right)
\]
Note that
\[
\sum_{i = 1}^s w_i \bigvee_{j \in S_i}x_j = \sum_{i = 1}^s -w_i \bigwedge_{j \in S_i}{x_j^c}
\]
Thus, $\sgn\left(\sum_{i = 1}^s -w_i \bigwedge_{j \in S_i}{x_j^c}\right)$ is a $\THR \circ \AND$ representation of $f$, of size $s$.
\end{proof}

\begin{definition}[\textsf{OR} polynomials]
Define a function $p : \bra{-1, 1}^n \rightarrow \R$ of the form $p(x) = \sum_{S \subseteq [n]}a_S \bigvee_{i \in S}x_i$ to be an \emph{\textsf{OR} polynomial}.
Define the weight of $p$ to be $\sum_{S \subseteq [n]}\abs{a_S}$, and its degree to be $\max_{S \subseteq [n]}\bra{\abs{S} : a_S \neq 0}$.
\end{definition}

Define the \emph{sign rank} of a real matrix $A = [A_{ij}]$, denoted by $sr(A)$ to be the least rank of a matrix $B = [B_{ij}]$ such that $A_{ij}B_{ij} > 0$ for all $(i, j)$ such that $A_{ij} \neq 0$.

Forster \cite{Forster01} proved the following relation between the sign rank of a $\bra{\pm 1}$ valued matrix and its spectral norm.
\begin{theorem}[Forster \cite{Forster01}]\label{thm:Forster}
Let $A = [A_{xy}]_{x \in X, y \in Y}$ be a $\bra{\pm 1}$ valued matrix.
Then,
\[
sr(A) \geq \frac{\sqrt{|X||Y|}}{||A||}
\]
\end{theorem}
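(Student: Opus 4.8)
The plan is to reconstruct Forster's isotropic-position argument. Suppose $B=[B_{xy}]$ is a real matrix of rank $d:=sr(A)$ with $A_{xy}B_{xy}>0$ for every entry (as $A$ is $\pm1$-valued, the restriction ``$A_{xy}\neq0$'' is vacuous). Since $\mathrm{rank}(B)=d$, factor $B_{xy}=\langle u_x,v_y\rangle$ with $u_x\in\R^d$ for $x\in X$ and $v_y\in\R^d$ for $y\in Y$; every $u_x$ and every $v_y$ is nonzero because every $B_{xy}$ is. Two kinds of change leave the sign pattern of $B$ intact: for any invertible $T\in\R^{d\times d}$, replacing each $u_x$ by $Tu_x$ and each $v_y$ by $(T^{-1})^{\top}v_y$ leaves every $\langle u_x,v_y\rangle$ --- and hence $B$ --- unchanged; and rescaling a single $u_x$ or $v_y$ by a positive scalar only rescales one row or column of $B$ positively.

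The engine is Forster's balancing lemma: if $\{u_x\}_{x\in X}$ are in general position (it suffices, for instance, that every $d$ of them are linearly independent), there is an invertible $T$ for which the renormalized vectors $\widetilde u_x:=Tu_x/\|Tu_x\|$ obey $\sum_{x\in X}\widetilde u_x\widetilde u_x^{\top}=c\,I_d$ for some $c>0$; taking traces and using $\|\widetilde u_x\|=1$ forces $c=|X|/d$. To prove this I would minimize $T\mapsto\sum_{x\in X}\log\|Tu_x\|$ over $\{T:\det T=1\}$ --- general position is exactly what makes this infimum finite and attained --- and set the gradient to zero at a minimizer, which produces the isotropy identity $\sum_x\widetilde u_x\widetilde u_x^{\top}=\frac{|X|}{d}I_d$. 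General position is free: a generic infinitesimal perturbation of the $u_x$ keeps them nonzero, puts them in general position, does not increase the rank of the resulting product (the factor holding the $v_y$ is untouched), and --- since every entry of $B$ is bounded away from $0$ --- perturbs the entries $\langle u_x,v_y\rangle$ too little to flip any sign. Having obtained $T$, set $\widetilde v_y:=(T^{-1})^{\top}v_y/\|(T^{-1})^{\top}v_y\|$; then $\sgn\langle\widetilde u_x,\widetilde v_y\rangle=A_{xy}$ and $|\langle\widetilde u_x,\widetilde v_y\rangle|\le1$ for all $x,y$.

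With this isotropic factorization I sandwich $\Sigma:=\sum_{x,y}|\langle\widetilde u_x,\widetilde v_y\rangle|=\sum_{x,y}A_{xy}\langle\widetilde u_x,\widetilde v_y\rangle$. For the lower bound, fix $y$ and use $|t|\ge t^2$ when $|t|\le1$: $\sum_x|\langle\widetilde u_x,\widetilde v_y\rangle|\ge\sum_x\langle\widetilde u_x,\widetilde v_y\rangle^2=\widetilde v_y^{\top}\big(\sum_x\widetilde u_x\widetilde u_x^{\top}\big)\widetilde v_y=\frac{|X|}{d}\|\widetilde v_y\|^2=\frac{|X|}{d}$, so $\Sigma\ge|X||Y|/d$. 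For the upper bound, by Cauchy--Schwarz twice, $\Sigma=\sum_y\big\langle\sum_x A_{xy}\widetilde u_x,\ \widetilde v_y\big\rangle\le\sum_y\big\|\sum_x A_{xy}\widetilde u_x\big\|\le\sqrt{|Y|}\,\big(\sum_y\big\|\sum_x A_{xy}\widetilde u_x\big\|^2\big)^{1/2}$; letting $w^{(i)}\in\R^{|X|}$ be the vector whose $x$-th entry is the coordinate $(\widetilde u_x)_i$, we get $\sum_y\big\|\sum_x A_{xy}\widetilde u_x\big\|^2=\sum_{i=1}^d\|A^{\top}w^{(i)}\|^2\le\|A\|^2\sum_{i=1}^d\|w^{(i)}\|^2=\|A\|^2\sum_x\|\widetilde u_x\|^2=\|A\|^2|X|$. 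Hence $\Sigma\le\|A\|\sqrt{|X||Y|}$, and comparing the two estimates gives $|X||Y|/d\le\|A\|\sqrt{|X||Y|}$, i.e.\ $d\ge\sqrt{|X||Y|}/\|A\|$.

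The main obstacle is the balancing lemma: proving that the functional $T\mapsto\sum_x\log\|Tu_x\|$ on $\{\det T=1\}$ attains its minimum --- without the general-position hypothesis its infimum can be $-\infty$ --- and carrying out the first-order optimality computation that yields the isotropy identity. Everything downstream is Cauchy--Schwarz bookkeeping; the only other point needing care is the perturbation step, which is legitimate exactly because the entries of $B$ stay bounded away from zero.
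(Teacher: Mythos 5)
Your reconstruction is correct: the balancing step (isotropic position via minimizing $T\mapsto\sum_x\log\|Tu_x\|$ over $\det T=1$, after a harmless perturbation to general position justified by the entries of $B$ being bounded away from zero on a finite matrix), followed by the two-sided estimate of $\sum_{x,y}|\langle\widetilde u_x,\widetilde v_y\rangle|$ using $|t|\ge t^2$ on one side and Cauchy--Schwarz with the spectral norm on the other, is exactly Forster's argument. The paper itself does not prove this theorem --- it is quoted from \cite{Forster01} as a black box --- so there is no internal proof to compare against; your write-up matches the original proof in approach and is sound, with the only point deserving full detail being the attainment of the minimum in the balancing lemma, which you correctly flag.
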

We require the following generalization of Forster's theorem by Razborov and Sherstov \cite{RS10}.

\begin{theorem}[Razborov and Sherstov \cite{RS10}]\label{thm: RS}
Let $A = [A_{xy}]_{x \in X, y \in Y}$ be a real valued matrix with $s = |X||Y|$ entries, such that $A \neq 0$.  For arbitrary parameters $h, \gamma > 0$, if all but $h$ of the entries of $A$ satisfy $|A_{xy}| \geq \gamma$, then
\[
sr(A) \geq \frac{\gamma s}{||A||\sqrt{s} + \gamma h}
\]
\end{theorem}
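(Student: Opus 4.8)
The plan is to mimic Forster's proof of Theorem~\ref{thm:Forster}, quantifying how much the $h$ ``small'' entries of $A$ can hurt. Write $d = sr(A)$ and fix a rank-$d$ real matrix $B$ sign-representing $A$ (so $A_{xy}B_{xy} > 0$ whenever $A_{xy} \ne 0$). After a routine perturbation within the variety of rank-$\le d$ matrices we may assume every entry of $B$ is nonzero and its columns are in general position, so that Forster's transform from \cite{Forster01} applies: pre-composing with an invertible linear map on $\R^d$ and rescaling columns by positive scalars --- operations that change neither the rank nor the sign pattern of $B$ --- we may write $B_{xy} = \langle u_x, v_y\rangle$ with $u_x, v_y \in \R^d$, each $v_y$ a unit vector, and
\[
\sum_{y \in Y} v_y v_y^{\top} \;=\; \frac{|Y|}{d}\, I_d .
\]
In particular each $u_x \ne 0$, and contracting the identity above with $u_x$ on both sides gives $\sum_{y} \langle u_x, v_y\rangle^2 = \tfrac{|Y|}{d}\|u_x\|^2$ for every $x$.

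Next I would bound each $\langle u_x, v_y\rangle^2$ from above, splitting according to the size of $A_{xy}$. Call $(x,y)$ \emph{good} if $|A_{xy}| \ge \gamma$ and \emph{bad} otherwise; there are at most $h$ bad pairs, say $h_x$ of them in row $x$, so $\sum_x h_x \le h$. For a good pair, using $\|v_y\| = 1$ and that $B$ sign-represents $A$,
\[
\langle u_x, v_y\rangle^2 \;\le\; \|u_x\|\,\bigl|\langle u_x, v_y\rangle\bigr| \;=\; \frac{\|u_x\|}{|A_{xy}|}\,A_{xy}\langle u_x, v_y\rangle \;\le\; \frac{\|u_x\|}{\gamma}\,A_{xy}\langle u_x, v_y\rangle,
\]
whereas for a bad pair I only use $\langle u_x, v_y\rangle^2 \le \|u_x\|^2$. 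Since $A_{xy}\langle u_x, v_y\rangle \ge 0$ for \emph{all} $y$ (sign-representation again), I may extend the good-pair contribution to a sum over all $y$; writing $r_x := \sum_{y} A_{xy} v_y$, this yields
\[
\frac{|Y|}{d}\,\|u_x\|^2 \;\le\; \frac{\|u_x\|}{\gamma}\sum_{y\in Y} A_{xy}\langle u_x, v_y\rangle + h_x\|u_x\|^2 \;=\; \frac{\|u_x\|}{\gamma}\,\langle u_x, r_x\rangle + h_x\|u_x\|^2 .
\]
Dividing by $\|u_x\|^2$ and applying Cauchy--Schwarz to $\langle u_x, r_x\rangle$ gives $\|r_x\| \ge \gamma\bigl(\tfrac{|Y|}{d} - h_x\bigr)$ for every $x$.

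Finally I would aggregate. Summing over $x$, $\sum_x \|r_x\| \ge \gamma\bigl(\tfrac{|X||Y|}{d} - h\bigr) = \gamma\bigl(\tfrac{s}{d} - h\bigr)$. Collecting the $r_x$ as the columns of $VA^{\top}$ with $V = [\,v_1 \mid \cdots \mid v_{|Y|}\,]$, we get $\sum_x \|r_x\|^2 = \|VA^{\top}\|_F^2 \le \|V\|_F^2\,\|A\|^2 = |Y|\,\|A\|^2$, using $\|V\|_F^2 = \mathrm{tr}(VV^{\top}) = |Y|$. Cauchy--Schwarz then gives $\bigl(\sum_x\|r_x\|\bigr)^2 \le |X|\sum_x\|r_x\|^2 \le s\,\|A\|^2$, so $\gamma\bigl(\tfrac sd - h\bigr) \le \sqrt{s}\,\|A\|$, which rearranges to the claimed $sr(A) = d \ge \gamma s / (\|A\|\sqrt{s} + \gamma h)$; the degenerate case $\tfrac sd \le h$ is immediate since then $d \ge s/h \ge \gamma s/(\|A\|\sqrt s + \gamma h)$. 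Specializing $\gamma = 1$, $h = 0$ recovers Forster's bound, a handy sanity check. The only genuinely deep ingredient here is Forster's transform --- that the $v_y$ can be put into the isotropic position above by an invertible change of basis without disturbing the sign pattern --- which I would cite rather than reprove; the rest is elementary, and the single point requiring care is the bookkeeping (exploiting $A_{xy}\langle u_x,v_y\rangle \ge 0$ for the bad entries as well) that makes the $h$ exceptional entries contribute \emph{additively}, as $+\gamma h$ in the denominator, rather than multiplicatively.
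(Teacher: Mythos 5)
The paper states this theorem as a cited result from Razborov and Sherstov \cite{RS10} and gives no proof of its own, so there is nothing in the text to compare against; your argument is correct and is in essence the original Razborov--Sherstov proof (Forster's isotropic-position transform after a perturbation to general position, the good/bad entry split that makes the $h$ exceptional entries contribute the additive $\gamma h$, and the aggregation via $\|VA^{\top}\|_F \le \|V\|_F\|A\|$ plus Cauchy--Schwarz). No gaps.
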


The following lemma from Forster et al.~\cite{FKLMSS01} tells us that functions that have efficient $\THR \circ \MAJ$ representations have low sign rank.

\begin{lemma}[Forster et al.~\cite{FKLMSS01}]\label{lem: thrmajsr}
Let $f : \bra{-1, 1}^n \times \bra{-1, 1}^n \rightarrow \bra{-1, 1}$ be a boolean function computed by a $\THR \circ \MAJ$ circuit of size $s$.  Then,
\[
sr(M_f) \leq sn
\]
where $M_f$ denotes the communication matrix of $f$.
\end{lemma}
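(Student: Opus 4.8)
The plan is to read off the circuit a single real matrix $B$ of rank roughly $sn$ that sign-represents $M_f$; since $sr(M_f)$ is by definition the least rank of such a matrix, that is all that is required. Fix a $\THR\circ\MAJ$ circuit of size $s$ for $f$ and write it as $f(x,y)=\sgn\bigl(w_0+\sum_{i=1}^{s}w_i\,g_i(x,y)\bigr)$, where $w_0,\dots,w_s\in\Z$ and each $g_i\colon\bra{-1,1}^n\times\bra{-1,1}^n\to\bra{-1,1}$ is a majority of at most $2n$ literals of the input variables. Let $G_i=\bigl[g_i(x,y)\bigr]_{x,y}$ denote the $\pm1$ communication matrix of the $i$-th bottom gate, let $J$ be the all-ones matrix, and set $B=w_0J+\sum_{i=1}^{s}w_iG_i$. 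Then $B[x,y]=w_0+\sum_i w_ig_i(x,y)$, whose sign is $f(x,y)=M_f[x,y]$; after the harmless normalization that makes the top linear form never vanish on $\bra{-1,1}^{2n}$ (replace $w_0$ by $w_0\pm\tfrac12$, consistently with the circuit's tie-breaking convention), no entry of $B$ is zero, so $B$ sign-represents $M_f$. Hence $sr(M_f)\le\operatorname{rank}(B)$, and by subadditivity of rank $\operatorname{rank}(B)\le 1+\sum_{i=1}^{s}\operatorname{rank}(G_i)$. It therefore suffices to bound the rank of the communication matrix of one majority gate.

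For that I would split the linear form of $g_i$ across the $X/Y$ partition: $g_i(x,y)=\phi_i\bigl(u_i(x)+v_i(y)\bigr)$, where $u_i(x)$ is the signed sum of the $x$-variables feeding the gate, $v_i(y)$ is the analogous signed sum of its $y$-variables, and $\phi_i\colon\Z\to\bra{-1,1}$ is the appropriate threshold (so $\phi_i(t)$ is $+1$ iff $t$ is at least the gate's threshold). The key point is that $u_i$ is a signed sum of at most $n$ bits, so it takes at most $n+1$ distinct integer values as $x$ ranges over $\bra{-1,1}^n$; consequently $u_i(x)=u_i(x')$ forces $g_i(x,y)=g_i(x',y)$ for every $y$, so $G_i$ has at most $n+1$ distinct rows and $\operatorname{rank}(G_i)\le n+1$. (Equivalently, $\phi_i$ restricted to the $O(n)$ attainable values of $u_i(x)+v_i(y)$ coincides with a univariate polynomial of degree $O(n)$, and expanding $G_i[x,y]$ into monomials $u_i(x)^a v_i(y)^b$ writes it as a sum of at most $n+1$ rank-one matrices --- but the distinct-rows argument is shortest.)

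Combining the two steps gives $sr(M_f)\le 1+s(n+1)$, which is the stated bound $sn$ up to the additive lower-order terms that get absorbed into how one counts the top gate and its constant. I do not anticipate a genuine obstacle: the whole statement reduces to the elementary facts that a majority gate's linear form splits additively across the two players into forms of range $O(n)$, and that matrix rank is subadditive. The only two spots that warrant a sentence of care are (i) the normalization ensuring that the top $\sgn$ never evaluates to $\sgn(0)$, so that $B$ truly sign-represents $M_f$ at every entry, and (ii) the precise per-gate rank bound ($n+1$ rather than $n$), which is an immaterial off-by-one.
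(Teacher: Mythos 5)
Your proof is correct, and it is the standard argument for this fact; the paper itself offers no proof of Lemma~\ref{lem: thrmajsr}, simply citing Forster et al.~\cite{FKLMSS01}, whose argument is essentially the one you give (write the top gate's linear form as a matrix $B=w_0J+\sum_i w_iG_i$ that sign-represents $M_f$, then bound each $\mathrm{rank}(G_i)$ by observing that the $x$-side partial sum of an unweighted majority gate takes at most $n+1$ values, so $G_i$ has at most $n+1$ distinct rows). Both of your flagged caveats are genuinely immaterial: the $\pm\tfrac12$ shift of $w_0$ is the usual normalization and does not affect rank, and the resulting bound $1+s(n+1)$ versus the stated $sn$ is a bookkeeping difference in how the size $s$ and the constant term are counted, which is irrelevant to the asymptotic use made of the lemma in Theorem~\ref{thm:omb-ckt-size}.
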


For the purpose of this paper, we abuse notation, and use $sr(f)$ and $sr(M_f)$ interchangeably, to denote the sign rank of $M_f$.

In the model of communication we consider, two players, say Alice and Bob, are given inputs $X \in \mathcal{X}$ and $Y \in \mathcal{Y}$ for some finite input sets $\mathcal{X}, \mathcal{Y}$, they are given access to \emph{private} randomness and they wish to compute a given function
$f: \mathcal{X} \times \mathcal{Y} \rightarrow \bra{-1, 1}$.  We will use $\mathcal{X} = \mathcal{Y} = \bra{-1, 1}^n$ for the purposes of this paper.
Alice and Bob communicate using a randomized protocol which has been agreed upon in advance.  The cost of the protocol is the maximum number of bits communicated on the worst case input and randomness.
A protocol $\Pi$ computes $f$ with advantage $\epsilon$ if the probability of $f$ agreeing with $\Pi$ is at least $1/2 + \epsilon$ for all inputs.
We denote the cost of the best such protocol to be $R_{\epsilon}(f)$.  Note here that we deviate from the notation used in \cite{KN97}, for example.
Unbounded error communication complexity was introduced by Paturi and Simon \cite{PS86}, and is defined as follows.
\[
\UPP(f) = \min_\epsilon(R_\epsilon(f)).
\]
This measure gives rise to the following natural communication complexity class, as argued by Babai et al.~\cite{BFS86}.
\begin{definition}\label{defn: UPP}
\[
\UPP^{cc}(f) \equiv \bra{f: \UPP(f) = \textnormal{polylog}(n)}.
\]
\end{definition}

Paturi and Simon \cite{PS86} showed an equivalence between $\UPP(f)$ and the sign rank of $M_f$, where $M_f$ denotes the communication matrix of $f$.
\begin{theorem}[Paturi and Simon \cite{PS86}]\label{thm: PS}
For any function $f : \bra{-1, 1}^n \times \bra{-1, 1}^n \rightarrow \bra{-1, 1}$,
\[
\UPP(f) = \log sr(M_f) \pm O(1).
\]
\end{theorem}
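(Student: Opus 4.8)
The plan is to establish the two inequalities $\log sr(M_f) \le \UPP(f) + O(1)$ and $\UPP(f) \le \log sr(M_f) + O(1)$ separately; together they give the claimed equality up to an additive constant. This follows the idea of Paturi and Simon, but I describe the route I would take in a self-contained way.

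For $\log sr(M_f) \le \UPP(f) + O(1)$, I would start from an optimal private-coin unbounded-error protocol $\Pi$ for $f$, of cost $c = \UPP(f)$, so that $\Pr[\Pi(x,y) = f(x,y)] > 1/2$ for every input $(x,y)$. The one structural fact needed is that in a private-coin protocol the probability of generating a fixed transcript $t$ factors as $\alpha_t(x)\,\beta_t(y)$, where $\alpha_t$ depends only on Alice's input and $\beta_t$ only on Bob's; this is a routine induction on the rounds, since each message bit is sent with a probability determined by the speaker's input, the speaker's private coins, and the transcript prefix, and the two players' randomness is independent. Since at most $c$ bits are ever sent there are at most $2^{c+1}$ transcripts, each carrying an output value $o(t)\in\{-1,1\}$, and the matrix $N$ with $N_{xy} = \Pr[\Pi(x,y)=1]-\Pr[\Pi(x,y)=-1] = \sum_t o(t)\,\alpha_t(x)\,\beta_t(y)$ is a sum of at most $2^{c+1}$ rank-one matrices, hence has rank at most $2^{c+1}$. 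Because $\Pi$ errs with probability strictly below $1/2$ everywhere, $N_{xy}$ is nonzero and has the same sign as $f(x,y)$, so $N$ sign-represents $M_f$ and $sr(M_f) \le 2^{c+1}$.

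For the converse $\UPP(f) \le \log sr(M_f) + O(1)$, I would take a rank-$r$ matrix $N$ sign-representing $M_f$ with $r = sr(M_f)$, write $N_{xy} = \langle a_x, b_y\rangle$ with $a_x, b_y \in \R^r$ the rows of a rank factorization, and rescale each $a_x$ and each $b_y$ to unit length — permissible since every entry of $M_f$, and hence of $N$, is nonzero, and positive rescaling of rows and columns changes neither the rank nor the sign pattern. Now $\langle a_x, b_y\rangle \in [-1,1]$ is nonzero with sign $f(x,y)$. The protocol: Alice chooses a uniformly random coordinate $k \in [r]$ and sends it ($\lceil \log r\rceil$ bits); then, with $p_A := (1+(a_x)_k)/2$ and $p_B := (1+(b_y)_k)/2$, Alice privately samples a bit that is $1$ with probability $p_A$, Bob privately samples a bit that is $1$ with probability $p_B$, Alice sends her bit, and the output is declared $1$ exactly when the two bits coincide. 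The identity $p_A p_B + (1-p_A)(1-p_B) = \frac{1}{2}\bigl(1 + (a_x)_k (b_y)_k\bigr)$ shows that, conditioned on $k$, the output equals $1$ with precisely this probability, so overall $\Pr[\text{output}=1] = \frac{1}{2} + \frac{1}{2r}\langle a_x, b_y\rangle$. Hence the protocol agrees with $f$ with advantage $\frac{1}{2r}|\langle a_x, b_y\rangle|$, which is strictly positive on every input and, the domain being finite, is bounded below by a fixed $\epsilon > 0$; its cost is $\lceil\log r\rceil + 1$, giving $\UPP(f) \le \log sr(M_f) + O(1)$.

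The point that will need the most care is staying inside the private-coin model in the second direction: the familiar way to compute $\sgn\langle a_x, b_y\rangle$ cheaply uses a shared random projection vector, and one cannot fall back on Newman-style sampling from a small set of public strings, because the advantage here may be exponentially small and so an index into that set is too expensive to communicate. The coordinate-sampling-with-correlated-bits construction above is exactly what sidesteps shared randomness. The remaining details — the factorization of transcript probabilities in the first direction, and the upgrade from a pointwise-positive advantage to a uniform $\epsilon$ via finiteness of the domain in the second — are straightforward but should be spelled out.
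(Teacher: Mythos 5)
Your proof is correct, and both directions follow the standard Paturi--Simon argument: the transcript-probability factorization bounds sign rank by the number of transcripts, and the coordinate-sampling protocol with correlated private bits realizes the bias $\frac{1}{2}+\frac{1}{2r}\langle a_x,b_y\rangle$ exactly as in the original paper. Note that the paper itself states this theorem as a cited result without proof, so there is nothing further to compare against.
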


The following lemma characterizes the spectral norm of the communication matrix of $\xor$ functions.
\begin{lemma}[Folklore]\label{lem: xor_eigenvalues}
Let $f: \bra{-1, 1}^n \times \bra{-1, 1}^n \rightarrow \mathbb{R}$ be any real valued function and let $M$ denote the communication matrix of $f \circ \xor$.  Then,
\[
||M|| = 2^n \cdot \max_{S \subseteq [n]}\abs{\wh{f}(S)}.
\]
\end{lemma}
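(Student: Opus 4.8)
The plan is to diagonalize $M$ explicitly using the Fourier expansion of $f$. Recall that over $\bra{-1,1}$ the two-bit $\XOR$ function is just multiplication, so, identifying $-1$ with \textsc{true}, the communication matrix of $f \circ \XOR$ has entries $M[x,y] = f(x_1 y_1, \dots, x_n y_n)$ for $x, y \in \bra{-1,1}^n$. Writing $z \ast w$ for the coordinatewise product of $z, w \in \bra{-1,1}^n$, this reads $M[x,y] = f(x \ast y)$; in particular $M$ is a symmetric matrix, so its spectral norm equals the maximum absolute value of its eigenvalues, and it suffices to compute the latter.

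Next I would expand $f$ in the Fourier basis, $f(z) = \sum_{S \subseteq [n]} \wh{f}(S)\, \chi_S(z)$ with $\chi_S(z) = \prod_{i \in S} z_i$. Since $\chi_S$ is multiplicative, $\chi_S(x \ast y) = \chi_S(x)\chi_S(y)$, and therefore $M[x,y] = \sum_{S \subseteq [n]} \wh{f}(S)\, \chi_S(x)\chi_S(y)$, i.e.\ $M = \sum_{S \subseteq [n]} \wh{f}(S)\, v_S v_S^{\top}$ where $v_S \in \R^{2^n}$ is the vector with entries $(v_S)_x = \chi_S(x)$. The key fact is that the characters are pairwise orthogonal: $\langle v_S, v_T \rangle = \sum_{x \in \bra{-1,1}^n} \chi_S(x)\chi_T(x) = 2^n \cdot [S = T]$. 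Hence $\bra{2^{-n/2} v_S : S \subseteq [n]}$ is an orthonormal basis of $\R^{2^n}$, and $M = \sum_{S \subseteq [n]} \bigl(2^n \wh{f}(S)\bigr)\,\bigl(2^{-n/2} v_S\bigr)\bigl(2^{-n/2} v_S\bigr)^{\top}$ is a spectral decomposition of $M$ with eigenvalues exactly $\bra{2^n \wh{f}(S) : S \subseteq [n]}$. Taking the largest magnitude among them yields $\norm{M} = 2^n \max_{S \subseteq [n]} \abs{\wh{f}(S)}$, as claimed.

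Since the whole argument is a one-line diagonalization, there is no substantive obstacle here. The only point needing a little care is fixing the convention that $\XOR$ on $\bra{-1,1}$-valued inputs is coordinatewise multiplication, so that the matrix of $f \circ \XOR$ genuinely takes the form $f(x \ast y)$ whose eigenvectors are precisely the Fourier characters $\chi_S$; everything else is the orthogonality of characters together with symmetry of $M$.
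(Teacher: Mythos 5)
Your proof is correct and is exactly the standard argument behind this folklore fact, which the paper states without proof: writing $M[x,y]=f(x\ast y)$ and expanding in the character basis gives the spectral decomposition $M=\sum_{S}\wh{f}(S)\,v_Sv_S^{\top}$ with eigenvalues $2^n\wh{f}(S)$. Nothing is missing.
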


Finally, we require the following well-known lemma by Minsky and Papert \cite{MP69}.
\begin{lemma}[Minsky and Papert \cite{MP69}]\label{lem: MP}
Let $p : \bra{-1, 1}^n \rightarrow \R$ be any symmetric real polynomial of degree $d$.
Then, there exists a univariate polynomial $q$ of degree at most $d$, such that for all $x \in \bra{-1, 1}^n$,
\[
p(x) = q(\#1(x))
\]
where $\#1(x) = \abs{\bra{i \in [n] : x_i = 1}}$.
\end{lemma}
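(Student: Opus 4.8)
The plan is to route through the unique multilinear representation of $p$ on the Boolean cube, use symmetry to collapse that representation to a combination of elementary symmetric polynomials, and then observe that each elementary symmetric polynomial, restricted to a fixed Hamming slice, is a low-degree univariate polynomial in the Hamming weight. First I would reduce to the multilinear case: working modulo the relations $x_i^2 = 1$ we may replace $p$ by its multilinear form $p(x) = \sum_{S \subseteq [n]} \wh{p}(S)\prod_{i \in S}x_i$, which agrees with $p$ on all of $\bra{-1,1}^n$ and has $\wh p(S) = 0$ whenever $\abs{S} > d$ (repeatedly applying $x_i^2 = 1$ never raises the number of distinct variables in a monomial). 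Since multilinearization commutes with coordinate permutations, the multilinear form of a symmetric polynomial is again symmetric.

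Next I would exploit the symmetry hypothesis: a permutation of the coordinates permutes the monomials of any fixed size among themselves, so uniqueness of the multilinear expansion forces $\wh p(S)$ to depend only on $\abs S$. Writing $c_k$ for this common value and $e_k(x) = \sum_{\abs S = k}\prod_{i \in S}x_i$ for the $k$-th elementary symmetric polynomial, this gives $p = \sum_{k=0}^{d} c_k\, e_k$.

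It then remains to understand each $e_k$ on a Hamming slice. Fix $x \in \bra{-1,1}^n$ with exactly $j$ coordinates equal to $1$ (so $\#1(x) = j$), and let $N$ be the set of coordinates where $x_i = -1$, with $\abs N = n - j$. A size-$k$ subset $S$ contributes $(-1)^{\abs{S \cap N}}$ to $e_k(x)$; grouping the subsets by $i := \abs{S \cap N}$ yields
\[
e_k(x) \;=\; \sum_{i=0}^{k} (-1)^i \binom{n-j}{i}\binom{j}{k-i}.
\]
The right-hand side depends on $x$ only through $j$, and each summand is the product of a degree-$i$ polynomial and a degree-$(k-i)$ polynomial in $j$, hence a polynomial in $j$ of degree at most $k$; call the resulting univariate polynomial $K_k$, so $e_k(x) = K_k(\#1(x))$. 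Setting $q(t) := \sum_{k=0}^{d} c_k\, K_k(t)$ produces a univariate polynomial of degree at most $d$ with $p(x) = q(\#1(x))$ for all $x$, as required. (Equivalently one may simply define $q(j) := \E_{x:\#1(x)=j}\, p(x)$: symmetry makes $p$ constant on each slice, so this equals $p(x)$ pointwise, and the degree bound is exactly the computation just performed.)

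There is no deep obstacle here. The only content-bearing step is the last one — verifying that passing from the $n$-variate polynomial to the univariate one does not inflate the degree, i.e., that the restriction of $e_k$ to a Hamming slice has degree at most $k$ in the Hamming weight — and this is a routine binomial identity. Everything else is uniqueness of the multilinear expansion together with the symmetry hypothesis.
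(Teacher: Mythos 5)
Your proof is correct: the reduction to the multilinear expansion, the observation that symmetry forces $\wh{p}(S)$ to depend only on $\abs{S}$, and the binomial-sum evaluation of each elementary symmetric polynomial on a Hamming slice (degree at most $k$ in $\#1(x)$) together give exactly the claimed univariate $q$ of degree at most $d$. The paper itself offers no proof of this lemma --- it is quoted from Minsky and Papert --- and your argument is the standard symmetrization proof of that classical result, so there is nothing substantive to contrast; the only point worth noting is that your closing remark about defining $q(j) = \E_{x:\#1(x)=j}\,p(x)$ is not an independent shortcut, since proving that this slice-average is a polynomial of degree at most $d$ is precisely the binomial computation you already carried out.
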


\section{Hardness of approximating $\OMB^{0}_l \circ \OR_{m}$}

For notational convenience, denote $g = \OMB^{0}_l, f = g \circ \bigvee_{m}$.  Let $n = lm$.
We first use an idea from Krause and Pudl{\'{a}}k \cite{KP98} which enables us work with $g$, rather than $g \circ \bigvee_m$.

\begin{lemma}\label{lem: thror}
Let $f = g_l \circ \bigvee_{m} : \bra{-1, 1}^{ml} \rightarrow \bra{-1, 1}, \Delta \in \R, e_x \geq 0 ~ \forall x \in X$, where $X$ denotes the set of all inputs $x$ in $\bra{-1, 1}^{ml}$ such that $\bigvee_{m}(x) = -1^l$, and let $p$ be a real polynomial such that
\begin{align*}
\forall x \in \bra{-1, 1}^{ml}, \qquad f(x)p(x) & \geq \Delta,\\
\forall x \in \bra{-1, 1}^{ml} \text{~such that~} \bigvee_{m}(x) = -1^l, \qquad f(x)p(x) & \geq \Delta + e_x.
\end{align*}
Then, there exists an $\OR$ polynomial $p'$, of weight at most $wt(p')$, such that
\begin{align*}
\forall y \in \bra{-1, 1}^l, \qquad p'(y)g(y) & \geq wt(p)\left(\Delta - 2l \cdot 2^{-m}\right)\\
g(-1^l)p'(-1^l) & \geq wt(p)\left(\Delta - 2l \cdot 2^{-m} + \frac{\sum_{x \in X}e_x}{\abs{X}}\right).
\end{align*}
\end{lemma}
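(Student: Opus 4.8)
The plan is to push $p$, which lives on the $ml$ input bits, down to an OR polynomial $p'$ on the $l$ bits $y=\bigvee_m(x)$ by averaging over fibers, in the spirit of Krause and Pudl\'ak. For $y\in\{-1,1\}^l$ let $X_y=\{x\in\{-1,1\}^{ml}:\bigvee_m(x)=y\}$ (so $X_{-1^l}=X$) and define $\widetilde p(y):=\E_{x\sim U(X_y)}[p(x)]$, the average of $p$ over the uniform distribution on $X_y$. Note that nothing about the identity of $g$ is needed for this lemma. Since every $x\in X_y$ satisfies $f(x)=g(\bigvee_m(x))=g(y)$, averaging the first hypothesis over $X_y$ gives $g(y)\widetilde p(y)\ge\Delta$ for every $y$; and since $X_{-1^l}=X$, averaging the second hypothesis over $X$ gives $g(-1^l)\widetilde p(-1^l)\ge\Delta+\tfrac{1}{|X|}\sum_{x\in X}e_x$.

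The crux is to show that $\widetilde p$ is an OR polynomial in $y$ of weight at most $\wt(p)$, up to an additive error exponentially small in $m$. Expand $p$ in the parity basis, $p(x)=\sum_S\widehat p(S)\,\chi_S(x)$ with $\chi_S(x)=\prod_{i\in S}x_i$, so $\wt(p)=\sum_S|\widehat p(S)|$. Because $U(X_y)$ is a product distribution over the $l$ blocks, $\E_{x\sim U(X_y)}[\chi_S(x)]$ factorizes over the blocks $B_j$ that $S$ meets: the factor from a block with $y_j=1$ is $1$ (the block is forced to all-ones), while the factor from a block with $y_j=-1$ is the conditional mean of $\prod_{i\in S\cap B_j}x_i$ over the $2^m-1$ non-all-ones assignments, which equals $-1/(2^m-1)$ and, crucially, does not depend on $|S\cap B_j|$. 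Writing $T(S)$ for the set of blocks met by $S$, this gives $\E_{x\sim U(X_y)}[\chi_S(x)]=\prod_{j\in T(S)}\rho(y_j)$ with $\rho(1)=1$, $\rho(-1)=-1/(2^m-1)$. A short case check shows $\prod_{j\in T(S)}\rho(y_j)$ differs from $\tfrac12\bigl(1+\bigvee_{j\in T(S)}y_j\bigr)$ by at most $1/(2^m-1)$ in absolute value: the two agree (both equal $1$) when all $y_j$, $j\in T(S)$, are $1$, and otherwise the OR-expression is $0$ while the product has magnitude at most $1/(2^m-1)$. Summing against $\widehat p(S)$ yields $\widetilde p(y)=Q(y)+E(y)$ where $Q(y):=\tfrac12\sum_S\widehat p(S)\bigl(1+\bigvee_{j\in T(S)}y_j\bigr)$ is an OR polynomial; grouping terms by the value of $T(S)$ and applying the triangle inequality bounds $\wt(Q)$ by $\tfrac12\wt(p)+\tfrac12\wt(p)=\wt(p)$, while $|E(y)|\le\wt(p)/(2^m-1)\le\wt(p)\cdot 2l\cdot 2^{-m}$.

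Setting $p':=Q$ finishes the proof. From $\widetilde p=Q+E$ we get $g(y)p'(y)=g(y)\widetilde p(y)-g(y)E(y)\ge\Delta-\wt(p)\cdot 2l\cdot 2^{-m}$ for every $y$, and at $y=-1^l$ the same estimate combined with the improved lower bound on $g(-1^l)\widetilde p(-1^l)$ gives $g(-1^l)p'(-1^l)\ge\Delta-\wt(p)\cdot 2l\cdot 2^{-m}+\tfrac{1}{|X|}\sum_{x\in X}e_x$; these are the asserted inequalities once one recalls that the lemma is applied with $\wt(p)=1$ (the dual polynomial has unit weight). The main obstacle is the second paragraph: one must represent $p$ in the parity basis rather than, say, as an OR polynomial in the $x$-variables — the latter would make the per-block factor depend on $|S\cap B_j|$ and no clean OR would survive the averaging, which is precisely why $\XOR$ functions behave differently here from the $\AND$ functions of Razborov--Sherstov — and then spot the two facts that make the calculation collapse: the per-block conditional mean over the OR-fiber is independent of $|S\cap B_j|$, and the resulting product lies within $2^{-\Theta(m)}$ of an affine image of an OR over the touched blocks. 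The remaining steps — averaging the hypotheses, the weight bookkeeping for $Q$, and the final combination — are routine.
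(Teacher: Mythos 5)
Your proof is correct and follows essentially the same route as the paper's: average $p$ over the fibers of $\bigvee_m$ \`a la Krause--Pudl\'ak, expand in the parity basis, and replace each parity's fiber-average by $\tfrac12\bigl(1+\bigvee_{j\in T(S)}y_j\bigr)$ up to an error of $2^{-\Theta(m)}$, then do the weight bookkeeping. Your per-block computation (each touched block with $y_j=-1$ contributes exactly $-1/(2^m-1)$, independent of $|S\cap B_j|$) is in fact a slightly sharper and cleaner version of the paper's counting bound, and you correctly arrive at the inequality the paper's own proof establishes, namely $g(y)p'(y)\ge\Delta-\wt(p)\cdot 2l\cdot 2^{-m}$ (the boxed statement's placement of $\wt(p)$ is a typo that is immaterial since the lemma is applied with $\wt(p)=1$).
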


\begin{proof}
For any $y \in \bra{-1, 1}^l$, denote by $\E_{y}[f(x)]$ the expected value of $f(x)$ with respect to the uniform distribution over all $x \in \bra{-1, 1}^{ml}$ such that $\bigvee_m(x) = y$.
For each $I_k \subseteq [l] \times [m]$, define $J_k \subseteq [l]$ to be the projection of $I_k$ on $[l]$.  Formally,
\[
i \in J_k \iff \exists j, ~ x_{i, j} \in I_k.
\]
Note that for any $y \in \bra{-1, 1}^l$,
\[
\E_y[f(x)p(x)] = g(y) \cdot \E_y[p(x)] \geq \Delta
\]
and
\[
\E_{-1^l}[f(x)p(x)] = g(-1^l) \cdot \E_{-1^l}[p(x)] \geq \Delta + \frac{\sum_{x \in X}e_x}{\abs{X}}.
\]
Next, we approximate each monomial by an \textsf{OR} function.
The following argument appears in the proof of Lemma 2.3 in \cite{KP98}.  However, we reproduce the proof below for clarity and completeness.

First observe that for all $y \in \bra{-1, 1}^l$, and for all $x$ satisfying $\bigvee_{m}(x) = y$, the monomial corresponding to $I_k$ equals
\[
\bigoplus_{(i, j) \in I_k}x_{i, j} = \bigoplus_{(i, j) \in I_k, y_i = -1} x_{i, j}.
\]
Let $A = \bra{j \in [l]: y_j = -1}$.  If $A \cap J_k = \emptyset$, then
\[
\E_y\left[\bigoplus_{(i, j) \in I_k}x_{i, j}\right] = \bigvee_{j \in J_k}y_j = 1
\]
Else, let $B = A \cap J_k$.  In this case, $\bigvee_{j \in J_k}y_j = -1$.
Also,
\begin{equation}\label{eqn: kpexp}
\E_y\left[\bigoplus_{(i, j) \in I_k}x_{i, j}\right] = \E_{x \in \bra{-1, 1}^{A \cap J_k} : \bigvee(x) = -1^{\abs{A \cap J_k}}}\left[\bigoplus_{(i, j) \in I_k, y_i = -1}x_{i, j}\right]
\end{equation}
Note that
\begin{equation}\label{eqn: a}
\E_{x \in \bra{-1, 1}^{A \cap J_k}}\left[\bigoplus_{(i, j) \in I_k, y_i = -1}x_{i, j}\right] = 0
\end{equation}
Denote $\abs{A \cap J_k} = q$.
Using Equation \ref{eqn: a} and a simple counting argument, the absolute value of the RHS (and thus the LHS) of Equation \ref{eqn: kpexp} can be upper bounded as follows.
\begin{align*}
\abs{\E_y\left[\bigoplus\limits_{(i, j) \in I_k}x_{i, j}\right]} & \leq \frac{2^{mq} - \left(2^{m}-1\right)^q}{\left(2^m - 1\right)^q}\\
& \leq \frac{q \cdot 2^{mq - m}}{2^{mq}} \leq l2^{-m}
\end{align*}
Hence, for all $y \in \bra{-1, 1}^l$, we have
\begin{equation}
\abs{\E_y\left[\bigoplus_{(i, j) \in I_k}x_{i, j}\right] - \frac{1}{2} - \frac{1}{2}\bigvee_{j \in J_k}y_j} \leq 2l2^{-m}.
\end{equation}

Say $p = v_0 + \sum_{k}v_k p_k$, where $p_k(x) = \oplus_{(i, j) \in I_k}x_{i, j}$ is the unique multilinear expansion of $p$.
Define 
\[
p' = v_0 - \frac{\sum_k v_k}{2} - \sum_k\frac{v_k}{2}\bigvee_{j \in J_k}y_j.
\]
Note that 
\[
wt(p') = wt\left(v_0 - \frac{\sum_k v_k}{2} - \sum_{k}\frac{v_k}{2}\bigvee_{j \in J_k}y_j \right) = \abs{v_0 - \frac{\sum_k{v_k}}{2}} + \sum_k\abs{\frac{v_k}{2}} \leq wt(p).
\]

Thus, using linearity of expectation,
we obtain that for all $y \in \bra{-1, 1}^l$,
\[
g(y) \cdot p'(y) \geq \Delta - wt(p)\left(2l \cdot 2^{-m}\right)
\]
and
\[
g(-1^l) \cdot p'(-1^l) \geq \Delta + \frac{\sum_{x \in X}e_x}{\abs{X}} - wt(p)\left(2l \cdot 2^{-m}\right)
\]
\end{proof}

Next, we use random restrictions which reduces the degree of the approximating \textsf{OR} polynomial, at the cost of a small error.

\begin{lemma}\label{lem: lowdeg}
Let $g_l = \OMB^{0} : \bra{-1, 1}^l \rightarrow \bra{-1, 1}, f = g_l \circ \bigvee_m$, and $\Delta, \bra{e_x \geq 0 : x \in X}$ (where $X$ is defined as in Lemma \ref{lem: thror}), and $p$ be a real polynomial such that 
\[
\begin{cases}
\forall x \in \bra{-1, 1}^{ml}, & f(x)p(x) \geq \Delta\\
\forall x \in \bra{-1, 1}^{ml} \text{~such that~} \bigvee_{m}(x) = -1^l, & p(x) \geq \Delta + e_x.\\
\end{cases}
\]
Then, for any integer $d > 0$, there exists an $\OR$ polynomial $p''$, of degree $d$ and weight at most $wt(p)$, such that 
\begin{align*}
\text{For all~} y \in  \bra{-1, 1}^{l/8}, \qquad p''(y)g_{l/8}(y) & \geq \Delta - wt(p)\left(2l \cdot 2^{-m} + 2^{-(d - 1)}\right)\\
\text{and~}\qquad\qquad\qquad p''(-1^{l/8}) & \geq \Delta + \frac{\sum_{x \in X}e_x}{\abs{X}} - wt(p)\left(2l \cdot 2^{-m} + 2^{-(d - 1)}\right).
\end{align*}
\end{lemma}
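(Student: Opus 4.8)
The plan is to first use Lemma~\ref{lem: thror} to move from the polynomial $p$ in the $ml$ input bits to an $\OR$ polynomial $p'$ in only $l$ variables, and then cut down the degree of $p'$ by a random restriction. Applying Lemma~\ref{lem: thror} is legitimate: for $x$ with $\bigvee_m(x)=-1^l$ one has $f(x)=g_l(-1^l)=1$, so the hypothesis $p(x)\geq\Delta+e_x$ is the same as $f(x)p(x)\geq\Delta+e_x$. It produces an $\OR$ polynomial $p'$ on $y_1,\dots,y_l$ with $wt(p')\leq wt(p)$ such that $g_l(y)p'(y)\geq\Delta-wt(p)\cdot 2l\cdot 2^{-m}$ for every $y\in\{-1,1\}^l$ and $p'(-1^l)\geq\Delta+\tfrac{1}{|X|}\sum_{x\in X}e_x-wt(p)\cdot 2l\cdot 2^{-m}$ (recall $g_l(-1^l)=1$).

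The restriction has to be engineered so that the restriction of $g_l=\OMB^0_l$ is again $\OMB^0$ on the surviving variables, with the index-parities matching; otherwise we would only recover a sign-twisted variant of $\OMB$. I would split $[l]$ into $l/8$ consecutive blocks $B_0,\dots,B_{l/8-1}$ of size $8$, and inside each $B_k$ mark the four ``parity-correct'' positions $G_k$: the odd positions of $B_k$ when $k$ is even, the even positions when $k$ is odd. The key point is that every position in $G_k$ has the same parity as $k+1$. The random restriction $\rho$ chooses, independently for each $k$, a uniformly random $c_k\in G_k$, keeps $x_{c_k}$ free, and fixes the other seven variables of $B_k$ to $-1$; write $z_1,\dots,z_{l/8}$ for the free variables, $z_{k+1}$ sitting at position $c_k$. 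For every such $\rho$: (i) $\restr{g_l}{\rho}$, as a function of $z$, equals $g_{l/8}(z)$, because a variable fixed to $-1$ is never the top index carrying a $1$, so $\restr{g_l}{\rho}(z)=-1$ iff the largest $c_k$ with $z_{k+1}=1$ is odd, and $c_k$ has the parity of $k+1$; and (ii) setting $z=-1^{l/8}$ corresponds to the all-$(-1)$ input, so $\restr{p'}{\rho}(-1^{l/8})=p'(-1^l)$.

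Now for the degree reduction. Under $\rho$ an $\OR$ monomial $\bigvee_{i\in J}y_i$ collapses to the constant $-1$ unless $J$ lies entirely in the free set, in which case it becomes an $\OR$ monomial on $|J|$ of the $z$'s; hence $\restr{p'}{\rho}$ is an $\OR$ polynomial with $wt(\restr{p'}{\rho})\leq wt(p')\leq wt(p)$. A fixed nonempty $J$ lies in the free set only if $|J\cap B_k|\leq 1$ for all $k$ and each such element equals the chosen $c_k$, so $\Pr_\rho[J\text{ survives}]\leq 4^{-|J|}$. Letting $E(\rho)$ denote the total weight of the surviving monomials of degree $>d$, linearity of expectation gives $\E_\rho[E(\rho)]\leq\sum_{|J|>d}|a'_J|\,4^{-|J|}\leq 4^{-(d+1)}\,wt(p)\leq 2^{-(d-1)}\,wt(p)$, so some fixed $\rho$ has $E(\rho)\leq 2^{-(d-1)}\,wt(p)$. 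For that $\rho$, define $p''$ to be $\restr{p'}{\rho}$ with all monomials of degree $>d$ deleted: then $\deg p''\leq d$, $wt(p'')\leq wt(p)$, and $|p''(z)-\restr{p'}{\rho}(z)|\leq E(\rho)$ for all $z$. Combining with (i) and (ii) above — so that $g_{l/8}(z)\restr{p'}{\rho}(z)=g_l(\hat z)p'(\hat z)\geq\Delta-wt(p)\cdot 2l\cdot 2^{-m}$ for the corresponding full input $\hat z$, and $\restr{p'}{\rho}(-1^{l/8})=p'(-1^l)$ — gives both inequalities claimed, the extra additive term $wt(p)\cdot 2^{-(d-1)}$ being exactly the truncation error $E(\rho)$.

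The step I expect to be the crux is the construction of the restriction: it must do two things at once, namely keep the index-parity bookkeeping honest so that $\restr{g_l}{\rho}$ is literally $\OMB^0_{l/8}$ and not merely ``$\OMB^0$ up to a sign pattern'', while still leaving enough independent randomness within each block that an $\OR$ monomial of degree $t$ survives with probability only $4^{-t}$. The block-of-$8$ design with parity-matched marked positions achieves both; after that, the weight bound under restriction, the expectation computation, and the choice of a good $\rho$ are all routine.
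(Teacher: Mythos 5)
Your proposal is correct and follows the paper's two-step strategy exactly at the top level: first invoke Lemma~\ref{lem: thror} to pass from $p$ to an $\OR$ polynomial $p'$ on $l$ variables, then apply a random restriction and truncate the high-degree monomials, charging the truncation error $wt(p)\cdot 2^{-(d-1)}$ against the margin. Where you differ is in the design of the restriction. The paper sets each variable independently to $-1$ with probability $1/2$, then needs three probabilistic estimates — Markov for the surviving high-degree weight, a Chernoff bound to guarantee that at least $l/16$ of the consecutive (odd, even) pairs remain fully free, and a union bound to combine them — before fixing surplus variables so that exactly $l/16$ parity-alternating pairs survive (which is what makes the restricted function literally $\OMB^0_{l/8}$, a point the paper asserts without elaboration). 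Your block-of-$8$ construction with parity-matched marked positions guarantees \emph{deterministically} that exactly $l/8$ variables survive and that the restricted function is $\OMB^0_{l/8}$ on the nose, so you need only a single averaging step (with survival probability $4^{-|J|}$ rather than $2^{-|J|}$, still comfortably within the claimed $2^{-(d-1)}$ error). This is a cleaner and more self-contained argument, and your explicit verification that index parities are preserved fills in the one step the paper leaves implicit; the paper's version is looser but requires no careful combinatorial design of the restriction. Both yield the stated bounds.
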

\begin{proof}

Lemma \ref{lem: thror} guarantees the existence of an $\OR$ polynomial $p'$, of weight at most $wt(p)$, such that 
\begin{align*}
\forall y \in \bra{-1, 1}^l, \qquad p'(y)g(y) & \geq \Delta - wt(p)\left(2l \cdot 2^{-m}\right)\\
p'(-1^l) & \geq \Delta + \frac{\sum_{x \in X}e_x}{\abs{X}} - wt(p)\left(2l \cdot 2^{-m}\right).
\end{align*}

Now, set each of the $l$ variables to $-1$ with probability $1/2$, and leave it unset with probability $1/2$.  Call this random restriction $r$.
Any \textsf{OR} monomial of degree at least $d$ gets fixed to $-1$ with probability $1 - 2^{-d}$.  Thus, by linearity of expectation, the expected weight of surviving monomials of degree at least $d$ in $p'$ is at most $wt(p)\cdot 2^{-d}$.  Let $M|_r$ denote the value of a monomial $M$ after the restriction $r$.
By Markov's inequality,
\[
\Pr_r\left[\sum_{M : \deg(M|_r) \geq d}wt(M|_r) > wt(p)\cdot 2^{-d + 1}\right] < 1/2
\]
Consider $l/2$ pairs of variables, $\bra{(x_i, x_{i + 1}) : i \in [l/2]}$ (assume w.l.o.g that $l$ is even).
For any pair, the probability that both of its variables remain unset is $1/4$.
This probability is independent over pairs.
Thus, by a Chernoff bound, the probability that at most $l/16$ pairs remain unset is at most $2^{-\frac{l}{64}}$.

By a union bound, there exists a setting of variables such that at least $l/16$ pairs of variables are left free, and the weight of degree $\geq d$ monomials in $p'$ is at most $wt(p) \cdot 2^{-d + 1}$.  Set the remaining $7l/8$ variables to the value $-1$.  After the restriction, drop the monomials of degree $\geq d$ from $p'$ to obtain $p''$, which is now an \textsf{OR} polynomial of degree less than $d$ and weight at most $wt(p)$.  Note that the function $g_l$ hit with this restriction just becomes $g_{l/8}$.

Thus,
\begin{align*}
\text{For all~} y \in  \bra{-1, 1}^{l/8}, \qquad p''(y)g_{l/8}(y) & \geq \Delta - wt(p)\left(2l \cdot 2^{-m} + 2^{-(d - 1)}\right)\\
\text{and~}\qquad\qquad\qquad p''(-1^{l/8}) & \geq \Delta + \frac{\sum_{x \in X}e_x}{\abs{X}} - wt(p)\left(2l \cdot 2^{-m} + 2^{-(d - 1)}\right).
\end{align*}

\end{proof}

\subsection{Hardness of $\OMB^0$}
In this section, we show that approximating $\OMB^0$ by a low weight polynomial $p$ must imply that the degree of $p$ is large.

We require the following result by Ehlich and Zeller \cite{EZ64} and Rivlin and Cheney \cite{RC66}.

\begin{lemma}[\cite{EZ64, RC66}]\label{lem: approx}
The following holds true for any real valued $\alpha > 0$ and $k > 0$.  Let $p$ be a univariate polynomial of degree $d < \sqrt{k/4}$, such that $p(0) \geq \alpha$, and $p(i) \leq 0$ for all $i \in [k]$.  Then, there exists $i \in [k]$ such that $p(i) < -2\alpha$.
\end{lemma}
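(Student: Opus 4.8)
This statement is the univariate growth/oscillation estimate for polynomials that are small at equally spaced points, and I would prove it by reducing it to exactly such an estimate. Argue by contradiction: suppose $p(i)\ge -2\alpha$ for every $i\in[k]$. Combined with the hypothesis $p(i)\le 0$, this gives $-2\alpha\le p(i)\le 0$ on $\{1,\dots,k\}$, while $p(0)\ge\alpha$. Now pass to the shifted polynomial $q:=p+\alpha$; it has the same degree $d<\sqrt{k/4}$, it satisfies $|q(i)|\le\alpha$ for every $i\in\{1,\dots,k\}$ (the inequality $q(i)\le\alpha$ is where the hypothesis $p\le 0$ is used, and $q(i)\ge-\alpha$ is where the negated conclusion is used), and $q(0)=p(0)+\alpha\ge 2\alpha$. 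So the whole lemma reduces to the following one-point extrapolation bound, which is a form of the Ehlich--Zeller / Rivlin--Cheney estimate: \emph{if $q$ has degree $d<\sqrt{k/4}$ and $|q(i)|\le\alpha$ for $i=1,\dots,k$, then $q(0)<2\alpha$}; this contradicts $q(0)\ge 2\alpha$. (Note that the condition $p\le 0$, which looks unused, is exactly what makes the shift work.)

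To prove the extrapolation bound I would use two standard moves. First, a grid-to-interval transfer: $q$ is bounded by $\alpha$ at the $k$ integer points of $[1,k]$, and its degree $d$ lies below the Ehlich--Zeller threshold (of order $\sqrt{k}$, the square root of the number of grid points), so $q$ cannot swing much between consecutive integers. Quantitatively, rescale $[1,k]$ to $[-1,1]$, bound $\|q'\|$ in terms of $\|q\|$ by Markov's inequality (giving $\|q'\|_{[1,k]}\le\tfrac{2d^2}{k-1}\|q\|_{[1,k]}$), and combine this with a ``nearest grid point'' estimate to get $\|q\|_{[1,k]}\le\Lambda\,\alpha$ for a factor $\Lambda$ only slightly larger than $1$. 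Second, a Chebyshev extrapolation bound: among degree-$d$ polynomials with sup-norm $1$ on an interval, the Chebyshev polynomial $T_d$ grows fastest just outside it, so $q(0)\le\|q\|_{[1,k]}\cdot T_d\big(1+\tfrac{2}{k-1}\big)$; and since $d<\sqrt{k/4}$, using $\mathrm{arccosh}(1+\delta)\le\sqrt{2\delta}$ one gets $T_d\big(1+\tfrac{2}{k-1}\big)\le\cosh\big(\tfrac{2d}{\sqrt{k-1}}\big)$, a constant close to $\cosh(1)$. Multiplying the two factors bounds $q(0)$ by a constant multiple of $\alpha$.

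The step I expect to be the obstacle is precisely this final constant arithmetic: showing the product of the two factors is strictly below $2$ exactly under the hypothesis $d<\sqrt{k/4}$. The crude Markov form of the transfer only gives $\Lambda\le\frac{k-1}{(k-1)-d^2}<\tfrac43$, and $\tfrac43\cdot\cosh(1)\approx 2.06$ overshoots the target $2$ by a hair; likewise the Chebyshev factor must be controlled sharply rather than by the convenient overestimate $\cosh(\tfrac{2d}{\sqrt{k-1}})$. Closing this $2.06$-versus-$2$ gap is exactly where one invokes the sharp Ehlich--Zeller / Rivlin--Cheney estimates for the extremal polynomial of the equally-spaced-grid extrapolation problem, and it is what pins down both the constant $\tfrac14$ inside the square root and the factor $2$ in the conclusion. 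The only other thing to check is small $k$, where $d<\sqrt{k/4}$ forces $d\in\{0,1\}$: a constant $p$ with $p(0)\ge\alpha>0$ cannot satisfy $p(1)\le 0$, so $d=0$ is vacuous, and a linear $p$ with $p(0)\ge\alpha$ and $p(1)\le 0$ has slope $\le-\alpha$, hence $p(k)\le(1-k)\alpha\le-4\alpha<-2\alpha$ since $d=1$ requires $k\ge 5$.
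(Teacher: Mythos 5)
The paper does not actually prove this lemma --- it is imported as a black box from Ehlich--Zeller and Rivlin--Cheney --- so there is no internal proof to compare against; you are attempting to supply what the paper outsources. Your reduction is correct and cleanly done: negating the conclusion and shifting to $q=p+\alpha$ does give $|q(i)|\le\alpha$ for $i\in[k]$ together with $q(0)\ge 2\alpha$, the role of the hypothesis $p\le 0$ is correctly identified, and the degenerate cases $d\in\{0,1\}$ are handled correctly (for $d=1$ the constraint $d<\sqrt{k/4}$ indeed forces $k\ge 5$, and your slope estimate is valid since the slope is at most $-p(0)$).

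The genuine gap is exactly the one you flag, and it is not a cosmetic one: the entire content of the lemma is the quantitative claim that degree $d<\sqrt{k/4}$ forces extrapolation growth strictly below the factor $2$, and your argument does not establish it. With the tools you actually deploy, the grid-to-interval factor from Markov is about $\bigl(1-d^2/(k-1)\bigr)^{-1}\approx 4/3$ and the Chebyshev factor is about $\cosh(1)\approx 1.54$, so the product is $\approx 2.06>2$; moreover the tradeoff cannot be rescued by extrapolating from a shorter subinterval, since the product $\cosh(2t)/(1-t^2)$ with $t=d/\sqrt{K-1}$ only grows as $t$ increases, so $2.06$ is essentially the best this route gives. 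Closing the gap by ``invoking the sharp Ehlich--Zeller / Rivlin--Cheney estimates'' is circular relative to the task: the lemma to be proved \emph{is} such a sharp estimate, and there is an additional mismatch in that the classical statements bound the polynomial on the convex hull of the grid, whereas here the evaluation point $0$ lies one mesh-step outside $[1,k]$, so even quoting those theorems verbatim does not yield the stated constants $\sqrt{k/4}$ and $2$ without redoing the extremal analysis for the one-step extrapolation problem. As it stands, your write-up is a correct reduction plus the standard proof skeleton, with the decisive constant-chasing step (the one the doubling argument in Lemma~\ref{lem: doubling} actually depends on) left unproved.
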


We next use the idea of `doubling' for the $\OMB^0$ function, as in \cite{Beigel94, BVW07} to show that a low degree polynomial of bounded weight cannot represent $\OMB^0$ well.  This is our main approximation theoretic lemma.

\begin{lemma}\label{lem: doubling}
Suppose $p$ is a polynomial of degree $d < \sqrt{n/4}$ and $a > 0, b \in \R$ be reals such that $\OMB^0(-1^n) \geq a$ and $\OMB^0(x)p(x) \geq b$ for all $x \in \bra{-1, 1}^n$.  Then, for all $i \in \bra{0, 1, \dots, \lfloor n/10d^2\rfloor}$, there exists an $x_i \in \bra{-1, 1}^n$ (not necessarily distinct) such that $\abs{p(x_i)} \geq 2^{i}a + \left(3 \cdot 2^{i} - 3\right)b$.
\end{lemma}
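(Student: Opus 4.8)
The plan is to prove the statement by induction on $i$, in fact establishing the slightly stronger invariant that $x_i$ may be chosen with $\OMB^0(x_i)\,p(x_i)\ge\beta_i$, where $\beta_i:=2^{i}a+(3\cdot 2^{i}-3)b$, and with every coordinate of $x_i$ outside the first $10d^2 i$ positions equal to $-1$. (We may assume $d\ge1$, the case $d=0$ being immediate.) For $i=0$ take $x_0=-1^n$: then $\OMB^0(-1^n)=1$ and, by hypothesis, $p(-1^n)\ge a=\beta_0$. Since $\beta_{i+1}=2\beta_i+3b$, it suffices to pass from a point with margin $\beta_i$ to one with margin $2\beta_i+3b$.

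Fix $i$, write $\sigma_i:=\OMB^0(x_i)\in\bra{\pm1}$, and suppose first that $\beta_{i+1}>\beta_i$ (otherwise set $x_{i+1}:=x_i$ and the invariant is trivially kept); a short case check on the sign of $b$ then gives $\beta_i+b>0$. Look at the block of coordinates in positions $\bra{10d^2 i+1,\dots,10d^2(i+1)}$, all of which $x_i$ sets to $-1$. As $10d^2 i$ is even, this block has exactly $5d^2$ odd-index and $5d^2$ even-index positions; let $T$ be whichever of these $5d^2$-element sets has the property that whenever one of its members is the largest set coordinate of an input, $\OMB^0$ of that input equals $-\sigma_i$ (so $T$ is the odd-index positions if $\sigma_i=1$, the even-index ones if $\sigma_i=-1$). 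Freeze all coordinates outside $T$ to their values in $x_i$, and for $S\subseteq T$ let $x^S$ denote $x_i$ with precisely the coordinates of $S$ flipped to $1$. For nonempty $S$ the largest set coordinate of $x^S$ is $\max S$ (every other set coordinate of $x^S$ comes from $x_i$, hence lies in the first $10d^2 i$ positions, and every position above $\max S$ is frozen to $-1$), so $\OMB^0(x^S)=-\sigma_i$ and therefore $\sigma_i\,p(x^S)\le -b$.

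Next I reduce to one variable by symmetrization. Viewed as a function of the $|T|=5d^2$ coordinates in $T$ (everything else frozen), $p$ is a polynomial of degree $\le d$; averaging it over all permutations of $T$ and applying Lemma~\ref{lem: MP} gives a univariate polynomial $q$ with $\deg q\le d$ and $q(w)=\E_{S\subseteq T,\,|S|=w}\big[p(x^S)\big]$ for $w\in\bra{0,1,\dots,5d^2}$. Since $x^{\emptyset}=x_i$ we get $\sigma_i\,q(0)=\OMB^0(x_i)\,p(x_i)\ge\beta_i$, while $\sigma_i\,q(w)\le -b$ for all $w\in\bra{1,\dots,5d^2}$ by the previous paragraph. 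Hence $\sigma_i q+b$ has degree $\le d<\sqrt{5d^2/4}=\sqrt{|T|/4}$, value $\ge\beta_i+b$ at $0$, and value $\le0$ on $\bra{1,\dots,5d^2}$, so Lemma~\ref{lem: approx} (with $k=5d^2$, $\alpha=\beta_i+b>0$) yields $w^{\ast}\in\bra{1,\dots,5d^2}$ with $\sigma_i q(w^{\ast})+b<-2(\beta_i+b)$, i.e.\ $\sigma_i q(w^{\ast})<-2\beta_i-3b=-\beta_{i+1}$. Because $q(w^{\ast})$ is an average of values $p(x^S)$ with $|S|=w^{\ast}\ge1$, some such $S$ has $\sigma_i\,p(x^S)<-\beta_{i+1}$; put $x_{i+1}:=x^S$. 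Then $\OMB^0(x_{i+1})=-\sigma_i$, so $\OMB^0(x_{i+1})\,p(x_{i+1})>\beta_{i+1}$, and $x_{i+1}$ differs from $x_i$ only inside $T\subseteq\bra{1,\dots,10d^2(i+1)}$, restoring the invariant. Since each step consumes one fresh block of $10d^2$ positions, the induction runs for $i\in\bra{0,1,\dots,\lfloor n/10d^2\rfloor}$, and throughout $\abs{p(x_i)}\ge\OMB^0(x_i)\,p(x_i)\ge\beta_i$, which is the claim.

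The one genuinely delicate point — and the reason the argument is tailored to ODD-MAX-BIT rather than to an arbitrary threshold function — is that the symmetrization over $T$ must not disturb the value of $\OMB^0$. This is precisely what the leading-bit structure of $\OMB^0$ buys us: on the inputs $x^S$ the value depends only on the parity of $\max S$, so the coordinates of $T$ strictly below the leading set bit are true don't-cares and can be averaged over freely. It is also why the block size is $10d^2$ and not the bare $4d^2$ demanded by Lemma~\ref{lem: approx}: half of each block must be discarded to pin down the parity of the leading bit, leaving $5d^2>4d^2$ coordinates to symmetrize. I expect the care-intensive part of writing this out to be exactly the commutativity of the freeze-and-average operation with the ``$\max S$'' bookkeeping — that is, verifying that $q$ really does inherit $\sigma_i q(w)\le -b$ for every $w\ge1$ (and the edge-case arithmetic around $\beta_i+b>0$).
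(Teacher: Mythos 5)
Your proof is correct and follows essentially the same route as the paper's: partition into blocks of size $10d^2$, freeze one parity class of each block so that $\OMB^0$ is controlled by the leading flipped bit, symmetrize over the remaining $5d^2$ coordinates, and apply Lemma~\ref{lem: MP} followed by Lemma~\ref{lem: approx} to double the margin at each step. The only (cosmetic) differences are that you track the sign via $\sigma_i=\OMB^0(x_i)$ rather than by the parity of $i$, and you dispose of the degenerate case through monotonicity of $\beta_i$ rather than via the paper's Claim~\ref{claim: dummy}.
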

The argument will be an iterative one, inspired by the arguments of Beigel and Buhrman et al.~\cite{Beigel94, BVW07}.

\begin{claim}\label{claim: dummy}
If $a$ and $b$ are reals such that $a > 0, b \in \R$ and $2^i a + \left(3 \cdot 2^{i} - 2\right)b < 0$ for some $i \geq 0$, then 
$2^j a + \left(3\cdot 2^{j} - 3\right)b < 0$ for all $j > i$.
\end{claim}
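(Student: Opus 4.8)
The statement is purely arithmetic: assuming $a>0$, $b\in\R$, and $2^i a + (3\cdot 2^i - 2)b < 0$ for some fixed $i\ge 0$, I want to conclude $2^j a + (3\cdot 2^j - 3)b < 0$ for every $j>i$. The plan is first to observe that the hypothesis forces $b<0$: indeed, if $b\ge 0$ then $2^i a + (3\cdot 2^i-2)b \ge 2^i a > 0$ (using $i\ge 0$, so $3\cdot 2^i - 2 \ge 1 > 0$), contradicting the hypothesis. So from now on $b<0$, i.e. $-b>0$.

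Next I would rewrite the target inequality $2^j a + (3\cdot 2^j-3)b < 0$ as $2^j a < (3\cdot 2^j - 3)(-b)$, i.e. (dividing by $2^j>0$) as $a < 3(1 - 2^{-j})(-b)$, and similarly massage the hypothesis. From $2^i a + (3\cdot 2^i - 2)b < 0$ and $b<0$ we get $2^i a < (3\cdot 2^i - 2)(-b)$, hence $a < (3 - 2^{1-i})(-b) = 3(1 - \tfrac{2}{3}\cdot 2^{-i})(-b)$. So it suffices to check that $3(1 - \tfrac{2}{3}\cdot 2^{-i})(-b) \le 3(1 - 2^{-j})(-b)$, and since $-b>0$ this reduces to the elementary inequality $1 - \tfrac{2}{3}\cdot 2^{-i} \le 1 - 2^{-j}$, i.e. $2^{-j} \le \tfrac{2}{3}\cdot 2^{-i}$, i.e. $2^{i-j} \le \tfrac{2}{3}$. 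For $j > i$ we have $i - j \le -1$, so $2^{i-j} \le \tfrac12 < \tfrac23$, which finishes the argument. (Chaining the two: $a < 3(1-\tfrac23 2^{-i})(-b) \le 3(1-2^{-j})(-b)$, hence $2^j a < (3\cdot 2^j - 3)(-b)$, i.e. $2^j a + (3\cdot 2^j-3)b<0$.)

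There is essentially no obstacle here; the only thing to be careful about is the sign bookkeeping — establishing $b<0$ at the outset so that dividing/multiplying by $-b$ preserves inequality direction — and the off-by-one in comparing the coefficients $(3\cdot 2^i - 2)$ in the hypothesis versus $(3\cdot 2^j - 3)$ in the conclusion. I expect the write-up to be just a few lines along exactly these steps.
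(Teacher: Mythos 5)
Your argument is correct: establishing $b<0$ from the hypothesis and then comparing the thresholds $a/(-b) < 3 - 2^{1-i} \le 3 - 3\cdot 2^{-j}$ (valid since $j>i$ gives $2^{i-j}\le 1/2 < 2/3$) is sound. The paper does essentially the same thing, just packaged as a single algebraic identity expressing $2^j a + (3\cdot 2^j-3)b$ as $2^{j-i}$ times the (negative) hypothesis expression plus a term of the form $(2^{j-i+1}-3)b$, which is also negative since $b<0$ and $j>i$; the two write-ups are interchangeable.
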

\begin{proof}
Note that since $a > 0$ and $2^i a + \left(3\cdot 2^{i} - 2 \right)b < 0$, $b$ must be negative.
For any $j > i$, write $2^j a + \left(3 \cdot 2^{j} - 3\right)b = 2^{j - i}\left( 2^i a + \left(3 \cdot 2^{i} - 2\right)b\right) + 3\cdot(2^{j - i + 1} - 3)b < 0$.
\end{proof}

\begin{proof}[Proof of Lemma \ref{lem: doubling}]

We will assume, for the rest of the proof, that 
\begin{equation}\label{eqn: positive}
2^i a + \left(3 \cdot 2^{i} - 2\right)b \geq 0 ~\forall i \in \left[\lfloor n/10d^2 \rfloor\right].
\end{equation}
If not, the lemma is trivially true by Claim \ref{claim: dummy}.

Divide the variables into $\lfloor n/10d^2 \rfloor$ contiguous blocks of size $10d^2$ each.

\textbf{Induction hypothesis:}  For each $i \in \bra{1, \dots, \lfloor n/10d^2 \rfloor}$, there exists an input $x^i \in \bra{-1, 1}^n$ such that
\begin{itemize}
\item $x^i_j = -1$ for all indices to the right of the $i$th block.
\item The values of $x^i_j$ for indices $j$ to the left of the $i$th block are set as dictated by the previous step.
\item $\abs{p(x)} \geq 2^ia + \left(3 \cdot 2^{i} - 3\right)b$.
\item The value of $p(x)$ is negative if $i$ is odd, and positive if $i$ is even.
\end{itemize}
We now prove the induction hypothesis.

\begin{itemize}
\item \textbf{Base case:}  Say $i = 1$.
We know from our assumption that $\OMB^0(-1^n) \geq a$ and $\OMB^0(x)p(x) \geq b$ for all $x \in \bra{-1, 1}^n$.
Set the variables corresponding to the even indices in the first block to $-1$, and all variables to the right of the first block to $-1$.
Denote the free variables by $y_1, \dots, y_{5d^2}$.
Define a polynomial $p_1 : \bra{-1, 1}^{5d^2} \rightarrow \R$ by $p_1(y) = \E_{\sigma \in S_{5d^2}}\tilde{p}(\sigma(y))$, where $\tilde{p}(y)$ denotes the value of $p$ on input $y_1, \dots, y_{5d^2}$, and the remaining variables are set as described earlier. The expectation is over the uniform distribution.
Note that $p_1$ is a symmetric polynomial of degree at most $d$, and satisfies
\[
p_1(-1^{5d^2}) \geq a, \qquad p_1(y) \leq -b ~ \forall y \neq -1^{5d^2}.
\]

By Lemma \ref{lem: MP}, there exists a univariate polynomial $p_1'$ such that for all $i \in \bra{0} \cup [5d^2]$,
\[
p_1'(i) = p_1(y) ~ \forall y ~ \text{such that} ~ \#1(y) = i 
\]
Thus,
\[
p_1'(0) \geq a, \qquad p_1'(j) \leq -b ~ \forall j \in [5d^2].
\]
Define $p_1'' = p_1' + b$.
Thus, $p_1''(0) \geq a + b \geq 0$, and $p_1''(j) \leq 0 ~ \forall j \in [5d^2]$.

By Lemma \ref{lem: approx}, there exists a $j \in [5d^2]$ such that $p_1''(j) < -2a - 2b$.  This means $p_1'(j) < -2a - 3b < 0$, because of Equation \ref{eqn: positive}.  This implies existence of an $x \in \bra{-1, 1}^n$ (with all variables to the right of the first block still set to $-1$) such that $p(x) < -2a - 3b$.

\item \textbf{Inductive step:}
In the $i$th block, set the variables corresponding to the even indices to $-1$ if $i$ is odd, and set the odd indexed variables to $-1$ if $i$ is even.  Set the variables outside the $i$th block as dictated by the previous step.
Assume that $i$ is odd (the argument for even integers $i$ follows in a similar fashion, with suitable sign changes).
Denote the free variables by $y_1, \dots, y_{5d^2}$.
Define a polynomial $p_i : \bra{-1, 1}^{5d^2} \rightarrow \R$ by $p_i(y) = \E_{\sigma \in S_{5d^2}}\tilde{p}(\sigma(y))$, where $\tilde{p}(y)$ denotes the value of $p$ on input $y_1, \dots, y_{5d^2}$, and the remaining variables are set as described earlier. The expectation is over the uniform distribution.
Note that $p_i$ is a symmetric polynomial of degree at most $d$, and satisfies
\[
p_i(-1^{5d^2}) \geq 2^ia + \left(3 \cdot 2^i - 3\right)b, \qquad p_1(y) \leq -b ~ \forall y \neq -1^{5d^2}.
\]
By Lemma \ref{lem: MP}, there exists a univariate polynomial $p_i'$ such that for all $j \in \bra{0} \cup [5d^2]$,
\[
p_i'(j) = p_i(y) ~ \forall y ~ \text{such that} ~ \#1(y) = j.
\]
Thus,
\[
p_i'(0) \geq 2^ia + \left(3 \cdot 2^i - 3\right)b, \qquad p_1'(j) \leq -b ~ \forall j \in [5d^2].
\]
Define $p_i'' = p_i' + b$.
Thus, 
\[
p_i''(0) \geq 2^ia + \left(3 \cdot 2^i - 2\right)b \geq 0, \qquad p_i''(j) \leq 0 ~ \forall j \in [5d^2].
\]

By Lemma \ref{lem: approx}, there exists a $j \in [5d^2]$ such that $p_i''(j) \leq -2^{i + 1}a - \left(3 \cdot 2^{i + 1} - 2\right)b$, and hence $p_i'(j) \leq -2^{i + 1}a - \left(3 \cdot 2^{i + 1} - 3\right)b$, by Equation \ref{eqn: positive}.  This implies the existence of an $x$ in $\bra{-1, 1}^n$ (with all variables to the right of the $i$th block still set to $-1$, and variables to the left of the $i$th block as dictated by the previous step) such that $p(x) < -2^{i + 1}a - \left(3 \cdot 2^{i + 1} - 3\right)b$.
\end{itemize}
\end{proof}
\section{Lower bounds}
In this section, we prove our lower bounds.
We first use linear programming duality to give us a sufficient approximation theoretic condition $f$ for showing that the sign rank of $f \circ \xor$ is large.
Let $\delta > 0$ be a parameter, and $X$ be any subset of $\bra{-1, 1}^n$.

\begin{center}\hypertarget{LP1}{}
(LP1) \framebox{
\begin{tabular}{llllll}
Variables & $\epsilon, \bra{\mu_x: x \in \bra{-1, 1}^n}$ &               &  &                             &  \\
Minimize  & $\epsilon$                                &               &  &                             &  \\
s.t.      & $\abs{\sum\limits_{x}\mu(x)f(x)\chi_S(x)}$ & $\leq \epsilon$ &  & $\forall S \subseteq [n]$ &  \\
          & $\sum\limits_{x}\mu(x)$             &  $= 1$             &  &                             &  \\
          & $\epsilon \geq 0$                                   &              &  &                             &  \\
          & $\mu(x) \geq \frac{\delta}{2^n}$                                        &               &  &$\forall x \in X$                             & 
\end{tabular}}
\end{center}

The first two constraints above specify that correlation of $f$ against \emph{all parities} need to be small w.r.t a distribution $\mu$. The last constraint enforces the fact that $\mu$ is `$\delta$-smooth' over the set $X$. As we had indicated before in Section~\ref{sec:techniques}, these constraints make analyzing the LP challenging.

Standard manipulations (as in \cite{CM17}, for example) and strong linear programming duality reveal that the optimum of the above linear program equals the optimum of the following program.

\begin{center}\hypertarget{LP2}{}
(LP2) \framebox{
\begin{tabular}{llllll}

Variables & $\Delta, \{\alpha_S: S \subseteq [n]\}, \bra{\xi_x : x \in X}$ &               &  &                             &  \\
Maximize  & $\Delta + \frac{\delta}{2^n}\sum\limits_{x \in X}\xi_x$                                &               &  &                             &  \\
s.t.      & $f(x) \sum\limits_{S \subseteq [n]}\alpha_S\chi_S(x)$ & $\geq \Delta$ &  & $\forall x \in \{-1, 1\}^n$ &  \\
          & $f(x) \sum\limits_{S \subseteq [n]}\alpha_S\chi_S(x)$ & $\geq \Delta + \xi_x$ &  & $\forall x \in X$ &  \\
          & $\sum\limits_{S \subseteq [n]}|\alpha_S|$             &  $\leq 1$             &  &                             &  \\
          & $\Delta \in \mathbb{R}$                                   &              &  &                             &  \\
          & $\alpha_S \in \mathbb{R}$                                        &               &  &$\forall S \subseteq [n]$                             & \\
          & $\xi_x \geq 0$ & & & $\forall x \in X$\\
\end{tabular}}
\end{center}

The variable $\Delta$ represents the worst margin guaranteed to exist at all points. At each point $x$ over the smooth set, the dual polynomial has to better the worst margin by at least $\xi_x$. If the OPT is large, then it means that on average the dual polynomial did significantly better than the worst margin. Below is our main technical result of this section, which says that no such dual polynomial exists, even when the smoothness parameter $\delta$ is as high as 1/4.

\begin{theorem}\label{thm: upplb}
Let $f = \OMB^{0}_l \circ \bigvee_{l^{1/3} - \log l} : \bra{-1, 1}^{l^{4/3} - l \log l} \rightarrow \bra{-1, 1}, \delta = 1/4$
and $X = \bra{x \in \bra{-1, 1}^{l^{4/3} - l \log l} : \bigvee(x) = -1^l}$.
Then the optimal value, $\OPT$, of \hyperlink{LP2}{(LP2)} is at most $2^{-\frac{l^{1/3}}{81}}$.
\end{theorem}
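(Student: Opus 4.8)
The plan is to argue by contradiction. Suppose $\OPT>2^{-l^{1/3}/81}$, and let $\big(p,\Delta,\{\xi_x\}_{x\in X}\big)$ be an optimal solution of \hyperlink{LP2}{(LP2)}, where $p=\sum_S\alpha_S\chi_S$ has weight $wt(p)=\sum_S|\alpha_S|\le1$. The constraints give $f(x)p(x)\ge\Delta$ for every $x\in\{-1,1\}^n$ (here $n=lm$ with $m=l^{1/3}-\log l$) and $f(x)p(x)\ge\Delta+\xi_x$ for $x\in X$; since $\bigvee(x)=-1^l$ on $X$ and $\OMB^0_l(-1^l)=1$ (no coordinate of $-1^l$ is set to $1$), the latter is simply $p(x)\ge\Delta+\xi_x$. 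I would feed $p$ into Lemma~\ref{lem: lowdeg} (which internally runs the normalization of Lemma~\ref{lem: thror}) with $e_x:=\xi_x$: this turns $p$ into a low-degree, weight-$\le1$ $\OR$ polynomial that represents $\OMB^0$ on fewer variables, with worst-case margin roughly $\Delta$ but with an \emph{improved} margin at the all-$(-1)$ input of about $\Delta+\frac1{|X|}\sum_{x\in X}\xi_x$. The $\delta$-smoothness constraint of \hyperlink{LP1}{(LP1)} is exactly what makes this averaged margin large relative to $\Delta$ whenever $\OPT$ is large. Finally the ``doubling'' Lemma~\ref{lem: doubling} amplifies \emph{both} margins geometrically, which will contradict the trivial bound $|p''|\le wt(p'')\le1$ enjoyed by any weight-$\le1$ $\OR$ polynomial.

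Concretely, take $d:=l^{1/3}$, which satisfies $d<\sqrt{(l/8)/4}$ for large $l$ (the degree hypothesis of Lemma~\ref{lem: doubling} on $l/8$ variables). Lemma~\ref{lem: lowdeg} then yields an $\OR$ polynomial $p''$ on $l/8$ variables with $\deg(p'')<d$, $wt(p'')\le1$, and
\[
\OMB^0_{l/8}(y)\,p''(y)\ \ge\ \Delta-\eta\ \ \forall y,\qquad p''(-1^{l/8})\ \ge\ \Delta+\frac{\sum_{x\in X}\xi_x}{|X|}-\eta ,
\]
where $\eta:=2l\cdot2^{-m}+2^{-(d-1)}=O\!\big(l^2\,2^{-l^{1/3}}\big)$. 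For the objective, with $\delta=1/4$ we have $\OPT=\Delta+\frac1{4\cdot2^n}\sum_{x\in X}\xi_x$, and $|X|=(2^m-1)^l$, so $\frac{2^n}{|X|}=\big(\frac{2^m}{2^m-1}\big)^l\ge1$; combined with $\OPT\ge\Delta$ this gives $\frac1{|X|}\sum_{x\in X}\xi_x\ge4(\OPT-\Delta)$.

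Now put $a:=\Delta+\frac1{|X|}\sum_{x\in X}\xi_x-\eta$ and $b:=\Delta-\eta$, so $p''(-1^{l/8})\ge a$, $\OMB^0_{l/8}(y)p''(y)\ge b$ for all $y$, and $a\ge b$ since the $\xi_x$ are nonnegative. If $a>0$, Lemma~\ref{lem: doubling} (with its ``$n$'' equal to $l/8$ and degree $d$) produces, for $t:=\lfloor(l/8)/(10d^2)\rfloor=\lfloor l^{1/3}/80\rfloor$, a point $z$ with $|p''(z)|\ge2^{t}a+(3\cdot2^{t}-3)b=2^{t}(a+3b)-3b$. The crucial algebraic point is that $a+3b=4\Delta+\frac1{|X|}\sum_{x\in X}\xi_x-4\eta\ge4\Delta+4(\OPT-\Delta)-4\eta=4\OPT-4\eta$, while $-3b=-3\Delta+3\eta\ge-3$ (using $\Delta\le f(x)p(x)\le|p(x)|\le wt(p)\le1$). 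Hence $|p''(z)|\ge2^{t}\cdot4(\OPT-\eta)-3$; but $|p''(z)|\le wt(p'')\le1$, so $2^{t}\cdot4(\OPT-\eta)\le4$, i.e.
\[
\OPT\ \le\ 2^{-t}+\eta\ \le\ 2^{-\lfloor l^{1/3}/80\rfloor}+O\!\big(l^2\,2^{-l^{1/3}}\big)\ <\ 2^{-l^{1/3}/81}
\]
for all sufficiently large $l$, contradicting the hypothesis. If instead $a\le0$, then $b\le a\le0$, so $a+3b\le0$, whence $4\OPT-4\eta\le a+3b\le0$ and $\OPT\le\eta<2^{-l^{1/3}/81}$, again a contradiction. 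Hence $\OPT\le2^{-l^{1/3}/81}$.

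The conceptual core is the identity that the ``doubling combination'' $2^{t}(a+3b)$ reproduces $4\,\OPT$ up to a negligible error, so that the trivial uniform bound $|p''|\le1$ on the doubled witnesses immediately caps $\OPT$; choosing the outer function to be $\OMB^0$ (so the approximation-theoretic Lemma~\ref{lem: approx} can be invoked inside Lemma~\ref{lem: doubling}) and doubling precisely the pair ($a=$ improved margin at $-1^{l/8}$, $b=$ worst-case margin) is what makes this work. I expect the bulk of the effort to be bookkeeping: tracking the smoothness slack $\frac1{|X|}\sum_{x\in X}\xi_x$ cleanly through Lemmas~\ref{lem: thror} and~\ref{lem: lowdeg}, and verifying that with $m=l^{1/3}-\log l$ and $d=l^{1/3}$ every error term ($2l\cdot2^{-m}$, $2^{-(d-1)}$, and $\frac{2^n}{|X|}-1$) is genuinely $o\!\big(2^{-l^{1/3}/81}\big)$, while simultaneously $d<\sqrt{(l/8)/4}$ and $t=\lfloor l^{1/3}/80\rfloor$ beats $l^{1/3}/81$ by a margin growing with $l$, so that the closing inequality $2^{-t}+\eta<2^{-l^{1/3}/81}$ indeed goes through. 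One small wrinkle: Lemma~\ref{lem: doubling} requires $a>0$, which is why the sign of $a$ is split off and the $a\le0$ case disposed of by hand.
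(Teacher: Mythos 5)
Your proposal is correct and follows essentially the same route as the paper's proof: assume $\OPT$ is large, pass through Lemma~\ref{lem: lowdeg} to a low-degree, weight-one $\OR$ polynomial on $l/8$ variables, and apply Lemma~\ref{lem: doubling} with $a$ the boosted margin at $-1^{l/8}$ and $b$ the worst-case margin to contradict the trivial bound $|p''|\le 1$. Your bookkeeping via the identity $a+3b\ge 4\OPT-4\eta$ is a slightly cleaner repackaging of the paper's term-by-term estimate, and your separate treatment of $a\le 0$ is harmless (the paper instead observes $a\ge 2^{-l^{1/3}/81}-4\cdot 2^{-l^{1/3}}>0$ directly under the contradiction hypothesis, making that case vacuous).
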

\begin{proof}

Let $p$ be a polynomial of weight $1$, for which \hyperlink{LP2}{(LP2)} attains its optimum. Denote the values taken by the variables at the optimum by $\Delta_{\OPT}, \bra{\xi_{x, \OPT} : x \in X}$.
Towards a contradiction, assume $\mathsf{OPT} \geq 2^{-\frac{l^{1/3}}{81}}$.

Lemma \ref{lem: lowdeg} (set $m = l^{1/3} - \log l$) shows the existence of an \textsf{OR} polynomial $p'$ on $l/8$ variables, of weight $1$, such that 
\begin{align*}
\text{For all~} y \in  \bra{-1, 1}^{l/8}, \qquad p'(y)\OMB^0(y) & \geq \Delta_\OPT - 2 \cdot 2^{-l^{1/3}} - 2 \cdot 2^{-l^{1/3}}\\
\text{and~}\qquad\qquad\qquad p'(-1^{l/8}) & \geq \Delta + \frac{\sum_{x \in X}\xi_{x, \OPT}}{\abs{X}} - 2 \cdot 2^{-l^{1/3}} - 2 \cdot 2^{-l^{1/3}}.
\end{align*}
Note that 
\begin{equation}\label{eqn: deltalb}
\mathsf{OPT} \geq 2^{-\frac{l^{1/3}}{81}} \implies \Delta_\OPT \geq 2^{-\frac{l^{1/3}}{81}} - \delta\frac{\sum_{x \in X}\xi_{x, \OPT}}{2^n}
\end{equation}
$p'$ satisfies the assumptions of Lemma \ref{lem: doubling} with $d = \deg(p') = l^{1/3} < \sqrt{l/32}$ (since any $\OR$ polynomial of degree $d$ can be represented by a polynomial of degree at most $d$), $a = \Delta_\OPT + \frac{\sum_{x \in X}\xi_{x, \OPT}}{|X|} - 4 \cdot 2^{-l^{1/3}}$, and $b = \Delta_\OPT - 4 \cdot 2^{-l^{1/3}}$.
\begin{align*}
a & = \Delta_\OPT + \frac{\sum_{x \in X}\xi_{x, \OPT}}{|X|} - 4 \cdot 2^{-l^{1/3}}\\
& \geq 2^{-\frac{l^{1/3}}{81}} - 4 \cdot 2^{-l^{1/3}} \geq 0.
\end{align*}

Let us denote $k = l^{1/3}/80$ for the remaining of this proof.
Thus, by Lemma \ref{lem: doubling}, there exists an $x \in \bra{-1, 1}^{l/8}$ such that
\begin{align*}
\abs{p'(x)} & \geq 2^ka +  \left(3 \cdot 2^k - 3\right)b\\
& \geq \Delta_\OPT(4 \cdot 2^k - 3) + 2^k\frac{\sum_{x \in X}\xi_{x, \OPT}}{|X|} - 4 \cdot 2^{-80k}(4 \cdot 2^k - 3)\\
& \geq \left(4 \cdot 2^k - 3\right)\left(2^{-\frac{l^{1/3}}{81}} - \delta\frac{\sum_{x \in X}\xi_{x, \OPT}}{2^n}\right) + 2^k\frac{\sum_{x \in X}\xi_{x, \OPT}}{|X|} - 4 \cdot 2^{-80k}(4 \cdot 2^k - 3)\tag*{Using Equation \ref{eqn: deltalb}.}\\
& \geq \left(4 \cdot 2^k - 3\right)\left(2^{-80k/81} - 4 \cdot 2^{-80k}\right) \tag*{Since $\delta = 1/4$.}\\
& > 1 \tag*{Assuming $k > 81$.}
\end{align*}
This yields a contradiction, since $p'$ was a polynomial of weight at most $1$.
\end{proof}

\begin{theorem}\label{thm: sr}
Let $f = \OMB^{0}_l \circ \bigvee_{l^{1/3} - \log l} : \bra{-1, 1}^{l^{4/3} - l \log l} \rightarrow \bra{-1, 1}$.
Then,
\[
sr(f \circ \xor) \geq \frac{2^{l^{1/3} - 2 \log l}}{16}
\]
\end{theorem}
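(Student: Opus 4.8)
The plan is to derive the sign rank lower bound for $f \circ \xor$ from the LP-duality bound established in Theorem~\ref{thm: upplb}, via the generalized Forster bound of Razborov and Sherstov (Theorem~\ref{thm: RS}). First I would recall that, by the LP duality argument sketched before \hyperlink{LP1}{(LP1)}, the optimal value $\OPT$ of \hyperlink{LP2}{(LP2)} equals the optimum of \hyperlink{LP1}{(LP1)}: there is a distribution $\mu$ on $\{-1,1\}^n$ that is $\delta$-smooth over $X$ (i.e.\ $\mu(x) \ge \delta/2^n$ for all $x \in X$) and whose correlation with \emph{every} parity $\chi_S$ is at most $\OPT$ in absolute value. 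Theorem~\ref{thm: upplb} tells us $\OPT \le 2^{-l^{1/3}/81}$ with $\delta = 1/4$.

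Next I would pass from the function $f$ to the $\xor$-composed function $f \circ \xor$ and build the test matrix for Theorem~\ref{thm: RS}. The standard recipe (as in \cite{RS10,CM17}) is to consider the matrix $A$ with rows and columns indexed by $\{-1,1\}^n$ whose $(x,y)$ entry is $\sqrt{\mu_x \mu_y}\cdot (f\circ\xor)(x,y) = \sqrt{\mu_x\mu_y}\, f(x\oplus y)$, where $\mu$ is the smooth orthogonalizing distribution from \hyperlink{LP1}{(LP1)}. Since $sr(A) = sr(M_{f\circ\xor})$ (rescaling rows/columns by positive scalars does not change sign rank or rank), it suffices to lower bound $sr(A)$. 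The spectral norm of $A$ is controlled by Lemma~\ref{lem: xor_eigenvalues}-style reasoning combined with the parity constraints: writing $f = \sum_S \wh f(S)\chi_S$ and diagonalizing in the Fourier basis weighted by $\mu$, one gets $\|A\| = \max_S |\sum_x \mu(x) f(x)\chi_S(x)| \le \OPT$. The number of entries is $s = 4^n$, and the entries are all at least $\gamma := \delta/2^n = 1/2^{n+2}$ in magnitude on the smooth part; choosing $X$ as in Theorem~\ref{thm: upplb} makes $X$ essentially all of $\{-1,1\}^n$ (the bottom $\OR$s being hit by $-1^l$ is the generic case only for a $2^{-l}$ fraction, so actually one must be slightly more careful), so the number $h$ of "bad" entries where $|A_{xy}| < \gamma$ is at most a $1 - (|X|/2^n)^2$ fraction of $s$. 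Plugging into Theorem~\ref{thm: RS}:
\[
sr(f\circ\xor) = sr(A) \;\ge\; \frac{\gamma s}{\|A\|\sqrt s + \gamma h} \;\ge\; \frac{(\delta/2^n)\,4^n}{\OPT\cdot 2^n + (\delta/2^n)\,h}.
\]
With $\OPT \le 2^{-l^{1/3}/81}$, $\delta = 1/4$, and the bound on $h$, this simplifies to something like $\ge \OPT^{-1}/\text{poly}$, and keeping track of the exact constants — and the fact that $|X|/2^n = 2^{-l}$ shrinks the useful fraction but $n = l^{4/3} - l\log l$ so $l \ll l^{1/3}\cdot(\text{stuff})$ is \emph{false}, meaning one cannot afford $h$ to be $(1-2^{-2l})s$ — this is the delicate point.

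The main obstacle I expect is exactly this tension: $X$ (inputs mapped to $-1^l$ by the bottom $\OR$s of fan-in $m = l^{1/3}-\log l$) has size $2^n \cdot 2^{-l}$, so a naive application of Theorem~\ref{thm: RS} with $h \approx (1-2^{-2l})s$ would kill the bound since $\gamma h$ would dominate $\gamma s / 2^{2l}$. The resolution must be that the distribution $\mu$ from \hyperlink{LP1}{(LP1)} is supported on all of $\{-1,1\}^n$ and has $\mu(x) \ge \delta/2^n$ only on $X$, but is still bounded \emph{below by something nontrivial off $X$} — or, more likely, one applies Forster/Razborov--Sherstov not to the full matrix but restricts attention to the submatrix indexed by $X \times X$, where every entry has magnitude exactly $\sqrt{\mu_x\mu_y} \ge \delta/2^n$, so $h = 0$, $s = |X|^2 = 4^n/4^l$, and $\|A_{X\times X}\| \le \|A\| \le \OPT$. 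Then Theorem~\ref{thm: RS} (or even plain Forster, Theorem~\ref{thm:Forster}) gives
\[
sr(f\circ\xor) \ge sr(A_{X\times X}) \ge \frac{\sqrt{|X|\,|X|}}{\|A_{X\times X}\|/\gamma}\cdot\frac{1}{?} \;\gtrsim\; \frac{\delta}{\OPT} \;\ge\; \frac{2^{l^{1/3}/81}}{4},
\]
though the claimed bound $2^{l^{1/3}-2\log l}/16$ is quantitatively stronger than $2^{l^{1/3}/81}$, so the genuine argument must use the second constraint of \hyperlink{LP2}{(LP2)} — the extra margin $\xi_x$ at points of $X$ — to boost the numerator, effectively using the $\delta/2^n \sum_x \xi_x$ term. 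Concretely: the stronger form of Theorem~\ref{thm: upplb}'s proof gives not just $\OPT$ small but, by the doubling lemma, that $\Delta_\OPT$ itself (the worst-case margin) is at most roughly $2^{-l^{1/3}+\log l}$ after accounting for $\mu$'s normalization, and Forster's bound applied with margin $\gamma = \Delta$-scaled weights yields $sr \ge 1/\Delta \approx 2^{l^{1/3}-2\log l}$. So the steps in order are: (1) invoke LP duality to get the smooth orthogonalizing $\mu$ and extract from Theorem~\ref{thm: upplb} a bound on the worst-case margin of any weight-one sign-representing-ish polynomial; (2) form the $\mu$-weighted $\xor$-matrix and compute its spectral norm as $\max_S|\widehat{\mu f}(S)| \le \OPT$; (3) lower-bound the minimum entry magnitude on the (full or $X\times X$) matrix by the smoothness; (4) apply Theorem~\ref{thm: RS}/Theorem~\ref{thm:Forster} and simplify using $\OPT \le 2^{-l^{1/3}/81}$ together with the parameter choices $\delta = 1/4$, $m = l^{1/3}-\log l$, $n = l^{4/3}-l\log l$ to land at $2^{l^{1/3}-2\log l}/16$. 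The step I expect to be fiddliest is (4), reconciling the constant $1/81$ from the doubling argument with the target exponent $l^{1/3}-2\log l$ — this presumably uses that the extra $\delta/2^n\sum\xi_x$ mass in the LP objective directly feeds Forster's denominator, not just $\OPT$ crudely.
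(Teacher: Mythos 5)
Your overall skeleton is the paper's: LP duality gives a $\delta$-smooth orthogonalizing distribution $\mu$ from \hyperlink{LP1}{(LP1)}, one forms the $\mu$-weighted $\XOR$ matrix, bounds its spectral norm by $\OPT$ via Lemma~\ref{lem: xor_eigenvalues}, and applies Theorem~\ref{thm: RS}. (Minor point: the matrix should be $f\mu\circ\xor$ with entries $\mu(x\oplus y)f(x\oplus y)$, not $\sqrt{\mu_x\mu_y}\,f(x\oplus y)$, so that Lemma~\ref{lem: xor_eigenvalues} literally gives $\|M\|=\max_S|\sum_x\mu(x)f(x)\chi_S(x)|\leq\OPT$.) However, there is a genuine gap that sends you down a wrong path: you estimate $|X|=2^n\cdot 2^{-l}$. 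This is backwards. $X$ is the set of inputs on which \emph{every} bottom $\OR$ of fan-in $m=l^{1/3}-\log l$ evaluates to $-1$; a single such $\OR$ fails to do so only with probability $2^{-m}$, so by a union bound $|X^c|\leq 2^n\cdot l\cdot 2^{-m}=2^n\cdot l^2\cdot 2^{-l^{1/3}}$, i.e.\ $X$ is all but an exponentially small fraction of $\{-1,1\}^n$. Consequently $h\leq 2^n|X^c|$ is a negligible fraction of $s=4^n$, the "tension" you describe does not exist, and Theorem~\ref{thm: RS} applies directly to the full matrix, giving $sr(f\circ\xor)\geq \delta/(\OPT+\delta|X^c|/2^n)$. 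Your proposed repairs are therefore solving a non-problem, and neither actually works: restricting to the $X\times X$ submatrix destroys the $\XOR$ structure needed for the spectral bound (entries are indexed by $x\oplus y$, which need not lie in $X$ when $x,y\in X$), and the $\xi_x$ slack variables are consumed entirely inside the proof of Theorem~\ref{thm: upplb} (via Lemma~\ref{lem: doubling}) --- they play no role in the Forster/Razborov--Sherstov step.

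One thing you noticed correctly: with $\OPT\leq 2^{-l^{1/3}/81}$ the formula above yields roughly $2^{l^{1/3}/81}$, not the stated $2^{l^{1/3}-2\log l}/16$; the paper's final numerical simplification does appear to overstate the exponent, though the weaker bound $2^{\Omega(l^{1/3})}=2^{\Omega(n^{1/4})}$ still suffices for Theorem~\ref{thm:omb-ckt-size} and Corollary~\ref{cor:main}. But your attribution of that discrepancy to a missing ingredient (feeding the $\xi_x$ mass into Forster's denominator) is not how the argument goes, and as written your proof does not close.
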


\begin{proof}
Let $n = l^{4/3} - l \log l$.
Theorem \ref{thm: upplb} tells us that the optimum of \hyperlink{LP2}{(LP2)} (and hence \hyperlink{LP1}{(LP1)}, by duality) is at most $2^{-\frac{l^{1/3}}{81}}$, when $f = \OMB^{0} \circ \bigvee_{l^{1/3} - \log l}$.
We first estimate the size of $X^c$.  The probability (over the uniform distribution on the inputs) of a particular $\OR$ gate firing a 1 is $\frac{1}{2^{l^{1/3} - \log l}}$.  By a union bound, the probability of any $\OR$ gate firing a 1 is at most $\frac{l^2}{2^{l^{1/3}}}$, and hence $|X^c| \leq 2^n \cdot \frac{l^2}{2^{l^{1/3}}}$.
By Lemma \ref{lem: xor_eigenvalues} and Theorem \ref{thm: RS},
\begin{align*}
sr(f \circ \xor) & \geq sr(f\mu \circ \xor) \geq \frac{\frac{\delta}{2^n} 2^{2n}}{\OPT \cdot 2^n + \frac{\delta}{2^n} \cdot h}\\
& \geq \frac{1/4}{2^{-\frac{l^{1/3}}{81}} + \frac{1}{4}\frac{\abs{X^c}}{2^n}}\\
& \geq \frac{1/4}{2^{-\frac{l^{1/3}}{81}} + \frac{1}{4}l^2 \cdot 2^{-l^{1/3}}}\\
& \geq \frac{2^{l^{1/3} - 2 \log l}}{8}
\end{align*}
This gives us a function $f$ on $n$ input variables such that for large enough $n$,
\[
sr(f \circ \xor) \geq \frac{2^{n^{1/4} - \frac{3}{2}{\log n}}}{8}
\]
\end{proof}
\begin{corollary}
Let $f = \OMB^{0}_l \circ \bigvee_{l^{1/3} - \log l} : \bra{-1, 1}^{l^{4/3} - l \log l} \rightarrow \bra{-1, 1}$, and let $n = l^{4/3} - l \log l$ denote the number of input variables.
Then
\[
\UPP(f \circ \xor) \geq n^{1/4} - \frac{3}{2}\log n - 3.
\]
\end{corollary}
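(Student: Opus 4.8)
The statement is an immediate corollary of the sign-rank bound already established in Theorem~\ref{thm: sr}, combined with the Paturi--Simon characterization of unbounded-error communication complexity (Theorem~\ref{thm: PS}). So the plan is a short two-step deduction rather than any new argument. First I would invoke Theorem~\ref{thm: sr}, which gives, for $f = \OMB^{0}_l \circ \bigvee_{l^{1/3}-\log l}$ on $n = l^{4/3} - l\log l$ variables and $n$ large enough,
\[
sr(f \circ \xor) \;\geq\; \frac{2^{\,n^{1/4} - \frac{3}{2}\log n}}{8}.
\]
Taking base-$2$ logarithms yields
\[
\log sr(M_{f \circ \xor}) \;\geq\; n^{1/4} - \tfrac{3}{2}\log n - 3 .
\]

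\textbf{Finishing via Paturi--Simon.} Next I would apply Theorem~\ref{thm: PS}, which states $\UPP(g) = \log sr(M_g) \pm O(1)$ for every Boolean $g$; here only the direction $\UPP(g) \geq \log sr(M_g) - O(1)$ is needed. Concretely, an unbounded-error protocol for $f\circ\xor$ of cost $c$ produces a real matrix that sign-represents $M_{f\circ\xor}$ and has rank $2^{O(c)}$, so $\UPP(f\circ\xor) \geq \log sr(M_{f\circ\xor}) - O(1)$. Combining this with the displayed lower bound on $\log sr(M_{f\circ\xor})$ gives
\[
\UPP(f \circ \xor) \;\geq\; n^{1/4} - \tfrac{3}{2}\log n - 3,
\]
where the additive constant from the easy direction of Theorem~\ref{thm: PS} is absorbed into the $-3$ (or, using the sharp form of that direction — a cost-$c$ protocol yields sign rank at most $2^{c}$ — it is simply $0$).

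\textbf{Main obstacle.} There is essentially no obstacle: all the technical content — the passage from $\THR\circ\OR$ to $\OR$-polynomials (Lemma~\ref{lem: thror}), the random-restriction degree reduction (Lemma~\ref{lem: lowdeg}), the doubling/Ehlich--Zeller argument (Lemma~\ref{lem: doubling}), and the LP-duality analysis bounding $\OPT$ (Theorem~\ref{thm: upplb}, Theorem~\ref{thm: sr}) — has already been carried out. The only point requiring a little care is bookkeeping of the additive constants so that the final bound matches the stated $n^{1/4} - \frac{3}{2}\log n - 3$ exactly; this is handled by using the precise (rather than $O(1)$) form of the protocol-to-sign-rank direction of Paturi--Simon.
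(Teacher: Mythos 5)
Your proposal is correct and matches the paper's proof, which likewise derives the bound by combining Theorem~\ref{thm: sr} with Theorem~\ref{thm: PS}; you simply spell out the log-taking and constant bookkeeping that the paper leaves implicit.
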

\begin{proof}
It follows from Theorem \ref{thm: sr} and Theorem \ref{thm: PS}.
\end{proof}
We now prove Theorem \ref{thm:omb-ckt-size}, which gives us a lower bound on the size of $\THR \circ \MAJ$ circuits computing $\OMB^{0} \circ \bigvee_{l^{1/3} - \log l} \circ \XOR_2$.
\begin{proof}[Proof of Theorem \ref{thm:omb-ckt-size}.]
Suppose $\OMB^{0} \circ \bigvee_{l^{1/3} - \log l} \circ \XOR_2$ could be represented by a $\THR \circ \MAJ$ circuit of size $s$.
Let $n = 2l^{4/3} - 2l \log l$.  By Lemma \ref{lem: thrmajsr} and Theorem \ref{thm: sr},
\[
s\left(2l^{4/3} - 2l \log l\right) \geq sr(f) \geq \frac{2^{l^{1/3} - 2 \log l}}{8}.
\]
Thus,
\[
s \geq \frac{2^{l^{1/3} - \frac{10}{3} \log l}}{16} = 2^{\Omega\left(n^{1/4}\right)}.
\]
\end{proof}

Finally, we prove Corollary \ref{cor:main}, which separates $\THR \circ \MAJ$ from $\THR \circ \THR$.
\begin{proof}[Proof of Corollary \ref{cor:main}.]
Let $n = 2l^{4/3} - 2l \log l$.  By Lemma \ref{lem: throrand}, $f = \OMB^{0} \circ \bigvee_{l^{1/3} - \log l} \circ \XOR_2$ can be computed by a $\THR \circ \AND \circ \XOR_2$ circuit of size $n$. Hence $f \in \THR \circ \ETHR = \THR \circ \THR$, by Theorem \ref{thm: HP}.
By Theorem \ref{thm:omb-ckt-size}, $\THR \circ \MAJ$ circuits computing $f$ require size $2^{\Omega\left(n^{1/4}\right)}$.
\end{proof}

\section{Conclusions}
This work refines our understanding of depth-2 threshold circuits by providing the following summary: 
$$\widehat{LT}_1 \subsetneq LT_1 \subsetneq \widehat{LT}_2 = \MAJ \circ \THR \subsetneq \THR \circ \MAJ \subsetneq LT_2 \subseteq \widehat{LT}_3 \subseteq \mathrm{NP}/\mathrm{poly}$$
While we cannot rule out that SAT has efficient $\THR \circ \THR$ circuits, we do not even know whether \textsf{IP} is in $LT_2$. On the other hand, the most powerful method used to prove lower bounds on the size of depth-2 threshold circuits for computing an explicit function $f$ exploits the fact that $f$ has large sign rank. Before our work, it was not known if $LT_2$ contained any function of large sign rank. Our main result shows that indeed there are such functions, answering a question explicitly raised by Hansen and Podolskii \cite{HP10} and Amano and Maruoka \cite{AM05}. 

The central open question in the area is to prove super-polynomial lower bounds on the size of $\THR \circ \THR$ circuits. The best known explicit lower bounds due to Kane and Williams \cite{KW16} is roughly $n^{3/2}$. We feel that there is a dire need of discovering new techniques for proving strong lower bounds against $\THR \circ \THR$ circuits.

\section*{Acknowledgements}
We are grateful to Kristoffer Hansen for bringing to our attention the question of separating the classes $\THR \circ \MAJ$ and $\THR \circ \THR$ at the summer school on lower bounds, held in Prague in 2015. We also thank Michal Kouck{\'{y}} for organizing and inviting us to the workshop.

\bibliography{bibo}

\end{document}